\newtheorem{claim}[theorem]{Claim}
\newcommand {\npc}     {\textsc{NP}\text{-complete} }
\newcommand {\nph}     {\textsc{NP}\text{-hard} }
\title{Minimum Reload Cost Cycle Cover in Complete Graphs}
\author[1]{Yasemin B\"{u}y\"{u}k\c{c}olak}
\author[2]{Didem G\"{o}z\"{u}pek}
\author[1]{Sibel \"{O}zkan}
\affil[1]{Department of Mathematics, Gebze Technical University, Kocaeli, Turkey}
\affil[2]{Department of Computer Engineering, Gebze Technical University, Kocaeli, Turkey}
\runningauthor{B\"{u}y\"{u}k\c{c}olak, G\"{o}z\"{u}pek and \"{O}zkan}
\begin{document}

\maketitle

\begin{abstract}
The reload cost refers to the cost that occurs along a path on an edge-colored graph when it traverses an internal vertex between two edges of different colors.  Galbiati et al. \cite{GalbiatiGualandiMaffioli} introduced the \emph{Minimum Reload Cost Cycle Cover} problem, which is to find a set of vertex-disjoint cycles spanning all vertices with minimum reload cost. They proved that this problem is strongly \nph and not approximable within $1/\epsilon$ for any $\epsilon>0$ even when the number of colors is $2$, the reload costs are symmetric and satisfy the triangle inequality. In this paper, we study this problem in complete graphs having equitable or nearly equitable $2$-edge-colorings. By showing the existence of a monochromatic cycle cover we prove that the minimum reload cost is  zero on complete graphs $K_n$ with an equitable $2$-edge-coloring except possibly $n = 4$ or with a nearly equitable $2$-edge-coloring except possibly for $n \leq 13$. Furthermore, we provide a polynomial-time algorithm that constructs a monochromatic cycle cover in complete graphs $K_n$ with an equitable $2$-edge-coloring except possibly for $n=4$. This algorithm also finds a monochromatic cycle cover in complete graphs with a nearly equitable $2$-edge-coloring except for some special cases.

\keywords{Reload cost, minimum reload cost cycle cover, complete graph, equitable edge-coloring, nearly equitable edge-coloring, monochromatic cycle cover.}
\end{abstract}

\section{Introduction}

Edge-colored graphs can be used to model various network design problems. In this work, we consider an optimization problem with the \emph{reload cost} model. The reload cost occurs along a path on an edge-colored graph while traversing an internal vertex via two consecutive edges of different colors. That is, the reload cost depends only on the colors of the incident traversed edges. In addition, the reload cost of a path or a cycle on an edge-colored graph is the sum of the reload costs that arise from traversing its internal vertices between edges of different colors. Because of practical reasons, it is generally assumed that the reload costs are symmetric and satisfy the triangle inequality. The reload cost concept is used in many areas such as transportation networks, telecommunication networks, and energy distribution networks. For instance, in a cargo transportation network, each carrier can be represented by a color and the reload costs arise only at points where the carrier changes, i.e., during transition from one color to another. In telecommunication networks, the reload costs arise in several settings. For instance, switching among different technologies such as cables, fibers, and satellite links or switching between different providers such as different commercial satellite providers in satellite networks correspond to reload costs. In energy distribution networks, the reload cost corresponds to the loss of energy while transferring energy from one form to another one, such as the conversion of natural gas from liquid to gas form.

Although the reload cost concept has significant applications in many areas, only few papers about this concept have appeared in the literature. Wirth and Steffan \cite{WirthSteffan}, and Galbiati \cite{Galbiati} studied the minimum reload cost diameter problem, which is to find a spanning tree with minimum diameter with respect to reload cost. Amaldi et al. \cite{AmaldiGalbiatiMaffioli} presented several path, tour, and flow problems under the reload cost model. They also focused on the problem of finding a spanning tree that minimizes the total reload cost from a source vertex to all other vertices.
The works in \cite{GGM11, GOP+16, FPT-by-tw-Delta, GozupekSVZ14} focused on the minimum changeover cost arborescence problem, which is to find a spanning arborescence rooted at a given vertex such that the total reload cost is minimized. The work in  \cite{GoSh16}, on the other hand, focused on problems related to finding a proper edge coloring of the graph so that the total reload cost is minimized.

Galbiati et al. \cite{GalbiatiGualandiMaffioli} introduced the \emph{Minimum Reload Cost Cycle Cover} (\textsc{MinRC3}) problem, which is to find a set of vertex-disjoint cycles spanning all vertices with minimum total reload cost.  They proved that it is strongly \nph and not approximable within $1/\epsilon$ for any $\epsilon > 0$ even when the number of colors is $2$, the reload costs are symmetric and satisfy the triangle inequality. In this work we focus on a special case of the \textsc{MinRC3} problem, namely \textsc{MinRC3} in complete graphs. Our primary motivation is to avoid the feasibility issue, since a complete graph of any order has a cycle cover. We first show that the \textsc{MinRC3} problem is strongly \nph and is not approximable within $1/\epsilon$ for any $\epsilon > 0$ for complete graphs, even when the reload costs are symmetric. We are then interested in the \textsc{MinRC3} problem in complete graphs having an \emph{equitable $2$-edge-coloring}, which is an edge-coloring with two colors such that for each vertex $v \in V(G)$, $||c_1(v)| -|c_2(v)|| \leq 1$, where $c_i(v)$ is the set of edges with color $i$ that are incident to $v$. To the best of our knowledge, this paper is the first one focusing on the \textsc{MinRC3} problem in a special graph class. In particular, we present the first positive (polynomial-time solvability) result for this problem.

Feasibility of equitable edge-colorings received some attention in the literature. In 2008, Xie et al. \cite{Xie} showed that the problem of finding whether an equitable $k$-edge-coloring exists is \npc in general. Indeed, if $k = \Delta$, where $\Delta$ is the maximum degree of the given graph, then this problem becomes equivalent to the well-known \npc problem of classifying Class-$1$ graphs. In 1994, Hilton and de Werra \cite{HiltonWerra} proved the following sufficiency condition on equitable $k$-edge-colorings: if $k \geq 2$ and $G$ is a simple graph such that no vertex in $G$ has degree equal to a multiple of $k$, then $G$ has an equitable $k$-edge-coloring. In 1971, de Werra \cite{Werra} found the following necessary and sufficient condition to have an equitable $2$-edge-coloring in a connected graph: a connected graph $G$ has an equitable $2$-edge-coloring if and only if it is not a connected graph with an odd number of edges and all vertices having an even degree. Furthermore, a \emph{nearly equitable $k$-edge-coloring} is an edge-coloring with $k$ colors such that for each vertex $v \in V(G)$ and for each pair of colors $i, j \in \{1,2,...,k\}$, $||c_i(v)| -|c_j(v)|| \leq 2$, where $c_i(v)$ is the set of edges with color $i$ that are incident to $v$. The notion of nearly equitable edge-coloring was introduced in 1982 by Hilton and de Werra \cite{HiltonWerra82}, who also proved that for each $k\geq 2$ any graph has a nearly equitable $k$-edge-coloring.

In this paper, we focus on  the \textsc{MinRC3} problem in complete graphs having an equitable or nearly equitable $2$-edge-coloring. Recall that the reload cost is zero between two edges of the same color. The reload cost of a monochromatic cycle, i.e., a cycle having all edges with the same color, is clearly zero. We then investigate the existence of a monochromatic cycle cover in complete graphs having an equitable or nearly equitable $2$-edge-coloring. In the literature, there exist various results about covering a $k$-edge-colored graph with monochromatic subgraphs such as cycles, paths, and trees (see \cite{ErdosGyarfasPyber, Gyarfas}).  In 1983, Gy\'{a}rf\'{a}s \cite{Gyarfas} proved that the vertex set of any $2$-edge-colored complete graph can be covered by two monochromatic cycles that have different colors and intersect in at most one vertex. In 2010, Bessy and Thomass\'{e} \cite{BessyThomasse} proved that the vertex set of any $2$-edge-colored complete graph can be partitioned into two monochromatic cycles having different colors, i.e., it has a vertex-disjoint monochromatic cycle cover with two different colors. However, unlike in the \textsc{MinRC3} problem, a vertex ($K_1$) and an edge ($K_2$) are considered to be cycles in almost all works in the literature (including \cite{BessyThomasse}) about monochromatic cycle covers. Clearly, both a vertex and an edge have zero reload cost, yet in this paper, we do not allow cycles to have less than three vertices.

We prove in this paper that except possibly for $n \leq 13$ in a complete graph $K_n$ with a nearly equitable $2$-edge-coloring, there exists a cycle cover that is either a monochromatic Hamiltonian cycle or consists of exactly two monochromatic cycles on the same color with sizes differing by at most one; therefore, the value of the minimum reload cost cycle cover is zero in such a case. In addition, we show that except possibly for $n = 4$ there exists a monochromatic cycle cover in complete graphs $K_n$ with an equitable edge coloring. Our constructive proof leads to a polynomial-time algorithm to solve the \textsc{MinRC3} problem on complete graphs with an equitable $2$-edge coloring. Our proof also leads to a polynomial-time algorithm to solve this problem on complete graphs with a nearly equitable $2$-edge coloring except for some special cases.

\section{Preliminaries}\label{sec:prelim}

An undirected graph  $G = (V(G),E(G))$ is given by a pair of a vertex set $V(G)$ and an edge set $E(G)$, which consists of $2$-element subsets $\{u,v\}$ of $V(G)$. An edge $\{u,v\}$ between two vertices $u$ and $v$ will be denoted by $uv$ in short. In this work, we consider only simple graphs, i.e., graphs without loops or multiple edges. The \emph{order} of $G$  is denoted by $|V(G)|$ and the \emph{degree} of a vertex $v$ of $G$ is denoted by $d(v)$. In addition, $\delta(G)$ and  $\Delta(G)$ denote the \emph{minimum} and \emph{maximum degree} of $G$, respectively. When the graph $G$ is clear from the context, we omit it from the notations and write $V$, $E$, $\delta$ and $\Delta$.

Given two graphs $G=(V,E)$ and $G'=(V',E')$, if $G$ is isomorphic to $G'$, we denote it by $G \cong G'$. We define the \emph{union} $G\cup G'$ of $G$ and $G'$ as the graph obtained by the union of their vertex and edge sets, i.e., $G \cup G' = (V \cup V', E \cup E')$. When $V$ and $V'$ are disjoint, their union is referred to as the \emph{disjoint union} and denoted by $G + G'$.  The \emph{join} $G \vee G'$ of $G$ and $G'$ is the disjoint union of graphs $G$ and $G'$ together with all the edges joining $V$ and $V'$. Formally,  $G \vee G' = (V \cup V', E \cup E' \cup \{V \times V'\})$. The \emph{complement} of a graph $G =(V,E)$ is the graph $\overline{G}=(V, \overline{E})$ (the same vertex set $V$ but whose edge set $\overline{E}$ consists of $2$-element subsets of $V$ that are not in $E$). That is, $E \cap \overline{E} = \emptyset$ and $E \cup \overline{E}$ contains all possible edges on the vertex set of $G$.

The \emph{closure} of a graph $G$ with $n$ vertices, denoted by $cl(G)$, is the graph obtained from $G$ by repeatedly adding edges between nonadjacent vertices whose degrees sum to at least $n$, until no such vertices exist. The \emph{degree sequence} of a graph $G$ is the nondecreasing sequence of its vertex degrees. A graph is \emph{r-regular} if all of its vertices have degree $r$. We say that a graph $H$ is an \emph{r-factor} of a graph $G$ when $V(H)=V(G)$ and $H$ is $r$-regular. Notice here that a cycle cover of a graph $G$ is equivalent to a $2$-factor of $G$.

An \emph{independent set} in a graph $G$ is a subset of pairwise nonadjacent vertices in $V(G)$. A \emph{maximum independent set} is an independent set of largest size for a given graph $G$. The size of a maximum independent set is called the \emph{independence number} of $G$ and is denoted by $\alpha(G)$.
Besides, a \emph{clique} of $G$ is a subset of vertices of $G$ whose induced subgraph is a complete graph. The following lemma is used in our arguments:
\begin{claim}\label{independencenumber}
For a graph $G$ on $n$ vertices with $\delta \leq n/2$ and  $\Delta \leq n - \delta$, the independence number $\alpha$ of $G$ satisfies the inequality $\alpha \leq n - \delta$ and the equality holds only for the complete bipartite graph $K_{n-\delta, \delta}$ with $\Delta = n-\delta$.
\end{claim}
\begin{proof}[Proof of Claim \ref{independencenumber}]
Assume to the contrary that $I$ is  an independent set of $G$ with size greater than $n - \delta$ and let $J$ be the remaining vertices in $G$, i.e., $|I| > n - \delta$ and $|J| < \delta$.  Each vertex in $I$ can be adjacent only to the vertices in $J$ since $I$ is an independent set in $G$. However, then each vertex in $I$ has degree less than $\delta$ in $G$, which is a contradiction  since $\delta$ is the minimum degree. When $|I| = n - \delta$ and $|J| = \delta$, each vertex in $I$ must be adjacent to every vertex in $J$ to attain the minimum degree $\delta$; moreover, each vertex in $J$ can be adjacent only to the vertices in $I$ since $\Delta \leq n - \delta$. Hence, $\alpha=n-\delta$ only when $G$ is $K_{n-\delta, \delta}$ with $\Delta = n-\delta$ as desired.
\end{proof}

A \emph{cycle} on $n$ vertices is denoted by $C_n$. A \emph{cycle cover} of a graph $G$ is a collection of cycles such that every vertex in $G$ is contained in at least one such cycle. If the cycles of the cover have no vertices in common, the cover is called \emph{vertex-disjoint}. Unless otherwise stated, cycle covers are always assumed to be vertex-disjoint in this work.
A \emph{Hamiltonian cycle }of a graph $G$ is a cycle passing through every vertex of $G$ exactly once, and a graph $G$ containing a Hamiltonian cycle is called \emph{Hamiltonian}. Some fundamental results on hamiltonicity used in this paper are as follows:

\begin{theorem}\label{di}\textbf{(Dirac \cite{Dirac})}
If $G$ is a graph of order $n \geq 3$ such that $\delta(G) \geq  n/2$, then
$G$ is Hamiltonian.
\end{theorem}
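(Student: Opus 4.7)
The plan is to follow the classical longest-path argument, which only uses the degree condition $\delta(G)\geq n/2$ and does not require any hamiltonian closure machinery. First I would establish that $G$ is connected: if $G$ had a component of size at most $n/2$, then every vertex in it would have degree strictly less than $n/2$, contradicting $\delta(G)\geq n/2$. Next, let $P=v_1v_2\cdots v_k$ be a longest path in $G$. By maximality of $P$, every neighbor of $v_1$ and every neighbor of $v_k$ must lie on $P$; otherwise $P$ could be extended.

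The key step is to produce a cycle on the vertex set of $P$. Consider the index sets
\[
S=\{\,i : v_1v_{i+1}\in E(G)\,\}\subseteq\{1,\ldots,k-1\},\qquad T=\{\,i : v_iv_k\in E(G)\,\}\subseteq\{1,\ldots,k-1\}.
\]
Then $|S|=d(v_1)$ and $|T|=d(v_k)$, and both are subsets of a ground set of size $k-1\leq n-1$. Since $|S|+|T|=d(v_1)+d(v_k)\geq 2\cdot n/2 = n > k-1$, the sets $S$ and $T$ must intersect in some index $i$. For this $i$, the edges $v_1v_{i+1}$ and $v_iv_k$ together with the sub-paths $v_{i+1}v_{i+2}\cdots v_k$ and $v_1v_2\cdots v_i$ form a cycle $C$ through exactly the $k$ vertices of $P$.

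Finally I would argue $C$ is Hamiltonian. Suppose, for contradiction, there exists $w\in V(G)\setminus V(C)$. By connectivity of $G$, some vertex $w$ outside $C$ is adjacent to a vertex $v_j$ on $C$. Breaking $C$ at $v_j$ yields a path of length $k$ on the vertices of $C$ ending at $v_j$, and appending the edge $v_jw$ produces a path of length $k+1$, contradicting the maximality of $P$. Hence $V(C)=V(G)$ and $G$ is Hamiltonian.

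The main obstacle is the intersection argument in the middle step: one has to pick the right index sets (it is crucial to shift by one so that membership in both $S$ and $T$ yields a cycle rather than just two chords) and then invoke the degree hypothesis exactly once to force $S\cap T\neq\emptyset$. Everything else — the connectivity reduction and the contradiction using a vertex outside $C$ — is straightforward once the rotation step is set up correctly.
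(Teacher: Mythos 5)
Your proof is correct: the connectivity reduction, the shifted index sets forcing a common index by $d(v_1)+d(v_k)\geq n>k-1$, and the extension step contradicting maximality of $P$ together constitute the standard longest-path/rotation proof of Dirac's theorem. The paper cites this result without proof, but its constructive function \textsc{DiracHamiltonian} (Algorithm~\ref{alg:Dirac}) implements exactly the same scheme --- maximal path, close to a cycle via the crossing chords, absorb an outside vertex, repeat --- so your argument matches the approach the paper relies on.
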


\begin{theorem}\label{buyuk}\textbf{(Büyükçolak et al. \cite{Buyukcolak})}
Let $G$ be a connected graph of order $n \geq 3$ such that $\delta(G) \geq \lfloor n/2\rfloor$. Then $G$ is Hamiltonian
unless $G$ is a graph $K_{\lceil n/2 \rceil} \cup K_{\lceil n/2 \rceil}$ with one common vertex or a graph $\overline{K}_{\lceil n/2 \rceil} \vee G_{\lfloor n/2\rfloor}$  for odd $n$, where $G_n$ is a not necessarily connected simple graph on $n$ vertices.
\end{theorem}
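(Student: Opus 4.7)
The plan splits on the parity of $n$. When $n$ is even, $\lfloor n/2 \rfloor = n/2$ coincides with Dirac's bound and Theorem~\ref{di} already provides a Hamiltonian cycle; moreover neither exceptional family can arise, since $K_{\lceil n/2\rceil}\cup K_{\lceil n/2\rceil}$ glued at a vertex has $2\lceil n/2 \rceil - 1$ vertices (always odd) and the join family is explicitly restricted to odd $n$. Hence there is nothing to prove for even $n$, and the substantive case is $n = 2k+1$ odd, where the hypothesis reads $\delta(G) \geq k$, one short of Dirac's threshold.

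For this case I would reach for the Bondy--Chv\'{a}tal closure $cl(G)$: by a classical theorem $G$ is Hamiltonian if and only if $cl(G)$ is. Assume $G$ is not Hamiltonian; then $cl(G) \neq K_n$, and for every non-adjacent pair $u,v$ in $cl(G)$, closure maximality forces $d_{cl(G)}(u) + d_{cl(G)}(v) \leq n-1 = 2k$, while each degree is at least $\delta(G) \geq k$. Hence both degrees equal $k$ exactly. Writing $L = \{v : d_{cl(G)}(v) = k\}$ and $H = V\setminus L$, every $h \in H$ has degree at least $k+1$, so $d(h) + d(w) \geq 2k+1 = n$ for every other vertex $w$; closure maximality then forces $h$ to be universal. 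Therefore $H$ induces a clique joined completely to $L$, and $cl(G)[L]$ is a $(k-|H|)$-regular graph on $2k+1-|H|$ vertices. Any Hamiltonian path of $cl(G)[L]$ lifts through a universal vertex to a Hamiltonian cycle of $cl(G)$, so $cl(G)[L]$ must admit no Hamiltonian path.

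From here the two advertised families are the only possibilities. If $|H| = k$ then $cl(G)[L]$ is $0$-regular on $k+1$ vertices, so $L$ is independent and $cl(G) = \bar{K}_{k+1}\vee K_k$; passing back to $G$, edges incident to $L$ cannot be removed (the $L$-vertices sit exactly at $\delta = k$) while edges inside $H$ may be removed arbitrarily, yielding exactly the family $\bar{K}_{k+1}\vee G_k$. For $|H| < k$, a counting argument using the Hamiltonian-path obstruction and the regularity of $cl(G)[L]$ forces $|H| = 1$ and $cl(G)[L] = K_k + K_k$, which together with the single universal vertex produces $K_{k+1}\cup K_{k+1}$ glued at one vertex. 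The main obstacle will be precisely this sub-case: showing that once $|H| \geq 2$ the universal vertices of $H$ always provide enough ``glue'' to stitch any components of $cl(G)[L]$ into a Hamiltonian cycle of $cl(G)$ (ruling that sub-case out), and that when $|H| = 1$ the combination of $(k-1)$-regularity on $2k$ vertices with the absence of a Hamiltonian path pins the graph down to $K_k + K_k$. Both are short extremal arguments using that each component of a $d$-regular graph has at least $d+1$ vertices, but they require careful case analysis to rule out hybrid possibilities.
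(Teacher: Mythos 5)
First, note that the paper does not prove Theorem~\ref{buyuk} at all: it is imported verbatim from \cite{Buyukcolak}, so there is no in-paper proof to match your argument against. Judged on its own terms, your even-$n$ reduction to Dirac and your closure analysis are sound as far as they go: the deduction that every non-adjacent pair in $cl(G)$ has both degrees exactly $k$ (for $n=2k+1$), that the remaining vertices $H$ are universal, that $cl(G)[L]$ is $(k-|H|)$-regular on $2k+1-|H|$ vertices, and the identification of the two exceptional families at $|H|=k$ and at $|H|=1$ with $cl(G)[L]=K_k+K_k$ are all correct.

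The genuine gap is the case $H=\emptyset$, which your proposal never addresses and which cannot be dispatched by the closure method: if no vertex of $cl(G)$ has degree exceeding $k$, then $G=cl(G)$ is a connected $k$-regular graph on $2k+1$ vertices (possible whenever $k$ is even), every non-adjacent pair has degree sum $2k<n$, the closure is trivial, and there is no universal vertex through which to ``lift'' a Hamiltonian path. Establishing Hamiltonicity here is exactly Nash-Williams' theorem that every $k$-regular graph on $2k+1$ vertices is Hamiltonian --- a substantial result, not a corollary of Bondy--Chv\'atal --- and it is precisely the case this paper invokes Theorem~\ref{buyuk} for in Lemma~\ref{lem-odd1} (the colour classes there are $2k$-regular on $4k+1$ vertices). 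A second, related gap: for $1\le |H|\le k-1$ you need $cl(G)[L]$, a $(k-|H|)$-regular graph on $2k+1-|H|$ vertices, to admit a spanning collection of at most $|H|$ disjoint paths. The component-size count (each component has at least $k-|H|+1$ vertices) does bound the number of components by $|H|$ when $|H|\ge 2$, but it does not give each component a Hamiltonian path: a connected $d$-regular component may have up to $2d+|H|$ vertices, so its minimum degree falls short of the $(\ell-1)/2$ threshold for traceability by $|H|/2$, and already at $|H|=1$ the claim ``every connected $(k-1)$-regular graph on $2k$ vertices is traceable'' is equivalent in difficulty to the theorem itself (add back the universal vertex and you recover an instance of the original hypothesis whose closure adds nothing). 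So the ``short extremal arguments'' you defer to do not exist at that level of generality; this is where the real work of \cite{Buyukcolak} lies.
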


\begin{theorem}\label{bo}\textbf{(Bondy-Chv\'{a}tal \cite{Bondy})}
A graph $G$ is Hamiltonian if and only if its closure $cl(G)$ is Hamiltonian.
\end{theorem}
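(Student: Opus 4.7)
The plan is to handle the two directions of the biconditional separately. The forward direction is essentially free: since the closure is built by only adding edges on the same vertex set, $G$ is a spanning subgraph of $cl(G)$, so any Hamiltonian cycle of $G$ is already a Hamiltonian cycle of $cl(G)$. The substance of the theorem is the converse.

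For the converse, I would view the construction of $cl(G)$ as the result of adding a finite sequence of edges $e_1, \ldots, e_k$ one by one, where at the moment $e_j = u_j v_j$ is added, the current graph $G_{j-1}$ satisfies $d_{G_{j-1}}(u_j) + d_{G_{j-1}}(v_j) \geq n$. If I can show that for any graph $H$ on $n$ vertices and any non-adjacent pair $u,v$ with $d_H(u) + d_H(v) \geq n$, Hamiltonicity of $H + uv$ implies Hamiltonicity of $H$, then reverse induction on $j$ propagates Hamiltonicity from $G_k = cl(G)$ all the way down to $G_0 = G$.

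The key single-edge lemma I would prove by contradiction. Suppose $H$ is not Hamiltonian but $H+uv$ is; any Hamiltonian cycle of $H+uv$ must then use the edge $uv$, so deleting $uv$ gives a Hamiltonian path $v_1 v_2 \cdots v_n$ in $H$ with $v_1 = u$ and $v_n = v$. The geometric idea is that if there is some index $i$ with both $v_1 v_{i+1} \in E(H)$ and $v_i v_n \in E(H)$, then
\[
v_1 v_{i+1} v_{i+2} \cdots v_n v_i v_{i-1} \cdots v_2 v_1
\]
is a Hamiltonian cycle of $H$, contradicting the assumption on $H$. To locate such an $i$, I set $S = \{i : 1 \leq i \leq n-1,\ v_1 v_{i+1} \in E(H)\}$ and $T = \{i : 1 \leq i \leq n-1,\ v_i v_n \in E(H)\}$; since $v_1 v_n = uv \notin E(H)$, one has $|S| = d_H(u)$ and $|T| = d_H(v)$, so $|S| + |T| \geq n$ while both sit inside the $(n-1)$-element set $\{1, \ldots, n-1\}$, and pigeonhole forces $S \cap T \neq \emptyset$.

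The routine checks are trivial, but the step that deserves the most care, and the one I expect to be the main obstacle, is the index bookkeeping around $S$ and $T$: one needs to verify that the rotation-style cycle really visits every vertex exactly once and that the common index $i$ truly yields two distinct non-path edges of $H$ rather than a degenerate configuration. This hinges on the fact that $v_1 v_n \notin E(H)$ (so $1 \notin T$) and that the two ``chords'' $v_1 v_{i+1}$ and $v_i v_n$ reconnect the two sub-paths obtained by splitting $v_1 \cdots v_n$ between positions $i$ and $i+1$. Once this single-edge lemma is in place, the inductive application along the closure-building sequence finishes the proof.
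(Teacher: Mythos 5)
Your proof is correct. Note, however, that the paper does not prove this statement at all: it is quoted as a known classical result (Bondy--Chv\'{a}tal, 1976) and used as a black box, so there is no in-paper argument to compare against. What you have written is precisely the standard original proof: the trivial forward direction via spanning-subgraph monotonicity, the reduction to the single-edge lemma applied in reverse along the edge-addition sequence defining $cl(G)$, and the rotation/pigeonhole argument with the index sets $S$ and $T$ inside $\{1,\ldots,n-1\}$ (where the crucial facts $n-1\notin S$ and $1\notin T$ follow from $uv\notin E(H)$). All steps check out, including the verification that the reassembled cycle $v_1 v_{i+1}\cdots v_n v_i\cdots v_2 v_1$ uses only edges of $H$ and visits every vertex exactly once.
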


\begin{theorem}\label{ch}\textbf{(Chv\'{a}tal \cite{Chvatal})}
Let $G$ be a simple graph with degree sequence $(d_1, d_2, ... , d_n)$, where $d_1 \leq d_2 \leq \cdots \leq d_n$. If there is no $m < n/2$ such that $d_m \leq m$ and $d_{n-m} < n - m$, then $cl(G)$ is a complete graph and therefore $G$ is Hamiltonian.
\end{theorem}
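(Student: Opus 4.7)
The plan is to combine the Bondy--Chv\'{a}tal theorem (Theorem~\ref{bo}) with a maximality argument on a non-adjacent pair in the closure $H := cl(G)$. Since Theorem~\ref{bo} says that $G$ is Hamiltonian if and only if $H$ is, it suffices to show under the degree-sequence hypothesis that $H$ equals the complete graph $K_n$. I would argue by contrapositive: assuming $H \ne K_n$, I will produce a value $m < n/2$ for which $d_m \le m$ and $d_{n-m} < n-m$, thereby violating the hypothesis.

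The key step is a careful choice of witness. Among all non-adjacent pairs of vertices of $H$, pick one, say $\{u,v\}$, that maximises the degree sum $d_H(u) + d_H(v)$; since no further closure edge can be added, one has $d_H(u) + d_H(v) \le n-1$. Set $m := d_H(u)$ with $d_H(u) \le d_H(v)$, so $m \le (n-1)/2 < n/2$. The heart of the argument is a double count in $H$. Every vertex $w \ne v$ non-adjacent to $v$ must satisfy $d_H(w) \le m$, for otherwise the pair $\{w,v\}$ would contradict the maximal choice; since there are $n - 1 - d_H(v) \ge m$ such vertices, I conclude $d_m^H \le m$. Similarly every vertex $w \ne u$ non-adjacent to $u$ satisfies $d_H(w) \le d_H(v) \le n - 1 - m$, and counting these together with $u$ itself (whose degree $m$ is also $< n-m$, since $m < n/2$) produces $n-m$ vertices of degree strictly less than $n-m$, which gives $d_{n-m}^H < n-m$.

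To conclude, I would lift the two inequalities from $H$ back to $G$: because $G \subseteq H$, adding edges only increases individual vertex degrees, so the $i$-th entry of the sorted degree sequence of $G$ is at most that of $H$ for every $i$. Consequently $d_m \le d_m^H \le m$ and $d_{n-m} \le d_{n-m}^H < n-m$, with $m < n/2$, contradicting the hypothesis. The main obstacle I anticipate is the bookkeeping in the second count: one must remember to include $u$ itself among the low-degree vertices so that the tally reaches exactly $n-m$ (not $n-m-1$) vertices of degree less than $n-m$, which is what the hypothesis actually forbids. Once this detail is handled, Hamiltonicity follows immediately from Theorem~\ref{bo}.
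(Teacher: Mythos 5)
Your argument is the standard proof of Chv\'{a}tal's theorem (essentially the one in Bondy--Murty): the paper states this result only as a cited classical theorem and supplies no proof of its own, but your choice of a non-adjacent pair in $cl(G)$ maximising the degree sum, the two counts giving $d_m^H \le m$ and $d_{n-m}^H < n-m$, and the termwise comparison of sorted degree sequences of $G$ and $cl(G)$ are all correct. The one loose end is the degenerate case $m = d_H(u) = 0$, where $d_m$ is not a valid index of the degree sequence; it is patched by noting that $u$ is then isolated in $G$ as well, so every vertex has degree at most $n-2$ and the witness $m=1$ works ($d_1 = 0 \le 1$ and $d_{n-1} \le n-2 < n-1$, with $1 < n/2$ since $n \ge 3$ is needed for Hamiltonicity anyway).
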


\begin{theorem}\label{n-w2}\textbf{(Nash-Williams \cite{NashWilliams})}
Let $G$ be  $2$-connected graph of order $n$ with  $\delta \geq \max\{(n+2)/3,\alpha(G)\}$. Then $G$ is Hamiltonian.
\end{theorem}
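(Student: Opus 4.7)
The plan is to argue by contradiction along the lines of the Chvátal–Erdős condition and the classical Pósa rotation technique. Assume $G$ is 2-connected with $\delta(G) \geq \max\{(n+2)/3,\alpha(G)\}$ but contains no Hamiltonian cycle, and work toward a contradiction.

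First I would choose a longest cycle $C$ in $G$, fix a cyclic orientation, and set $c = |V(C)|$; by assumption $c < n$, so some component $H$ of $G - V(C)$ is nonempty. Using 2-connectivity (via Menger's theorem applied to $H$ and $C$), I would pick a maximum family of $k$ internally disjoint paths from $V(H)$ to $V(C)$ with pairwise distinct endpoints $w_1,\ldots,w_k$ on $C$, written in cyclic order, and denote by $w_i^{+}$ the successor of $w_i$ along $C$. Two-connectivity forces $k \geq 2$.

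The standard rotation argument, leveraging maximality of $C$, then yields three non-adjacency facts: no two successors $w_i^{+}$ and $w_j^{+}$ are adjacent; no $w_i^{+}$ has a neighbor inside $H$; and for every $x \in V(H)$, the set $\{x, w_1^{+}, \ldots, w_k^{+}\}$ is independent, so $\alpha(G) \geq k+1$. Next comes a counting step using $\delta \geq (n+2)/3$: each $w_i^{+}$ has all its $\geq \delta$ neighbors on $V(C)\setminus\{w_i^{+}\}$, and maximality of $C$ prevents these neighbors from lying too close to the next attachment point, forcing the arc of $C$ between consecutive $w_i$'s to contain at least $\delta+1$ vertices. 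Summing over the $k$ arcs yields $c \geq k(\delta+1)$; combined with $\delta \geq (n+2)/3$, the choice $k=3$ already gives $c > n$, so necessarily $k=2$.

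In the remaining case $k=2$ the picture reduces to an H-shaped configuration: $H$ attached to $C$ through exactly two arcs. I would then bring in the second half of the hypothesis, $\delta \geq \alpha(G)$, by iterating the rotation from different choices of $x \in V(H)$ and along each of the two arcs, producing an independent set of size at least $\delta + 1$, which contradicts $\alpha(G) \leq \delta$. The main obstacle is exactly this last step: one must carefully propagate the non-adjacency conditions along the long arcs and keep track of which rotated endpoints remain mutually non-adjacent, so that the independent set genuinely grows to size $\delta+1$ rather than stalling at $k+1$. This delicate bookkeeping, balancing the $(n+2)/3$ degree bound against the $\alpha$ hypothesis, is what distinguishes Nash-Williams' theorem from the cleaner Dirac-type statements recorded in Theorem \ref{di} and Theorem \ref{buyuk}.
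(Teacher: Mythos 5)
First, note that the paper does not prove Theorem \ref{n-w2} at all: it is a classical result quoted from Nash-Williams \cite{NashWilliams} and used as a black box (in Case 2b of the proof of Lemma \ref{lem-odd3}), so there is no in-paper argument to compare against. Judged on its own, your sketch picks the right family of ideas (longest cycle, rotation, independent sets of successors), but two of its load-bearing steps are genuinely unsupported. The counting step claims that each $w_i^{+}$ has all of its $\geq \delta$ neighbours on $V(C)$; maximality of $C$ only rules out neighbours of $w_i^{+}$ in the one component $H$ you fixed, not in other components of $G - V(C)$, so this is not justified. Worse, the inference that every arc between consecutive attachment vertices contains at least $\delta + 1$ vertices, giving $c \geq k(\delta+1)$, is asserted rather than proved; this is precisely the content of the hard lemma in Nash-Williams' argument (that a longest cycle in a $2$-connected graph with $\delta \geq (n+2)/3$ is a dominating cycle), and it does not follow from the non-adjacency facts you list. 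Finally, in the surviving case $k=2$ you concede that the construction of an independent set of size $\delta+1$ is ``the main obstacle'' and leave it open; since your independent set $\{x, w_1^{+}, \ldots, w_k^{+}\}$ has size only $k+1 = 3$, nothing in the sketch bridges the gap from $3$ to $\delta+1$.

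The reason the last step stalls is that you are taking successors of the $k$ Menger attachment points, of which there are only $k$, instead of successors of the neighbours of a single off-cycle vertex, of which there are at least $\delta$. The standard proof runs: (i) using $2$-connectivity and $\delta \geq (n+2)/3$, show that a longest cycle $C$ is dominating, i.e.\ $G - V(C)$ has no edges; (ii) if $C$ is not Hamiltonian, pick $u \notin V(C)$; by (i) all $d(u) \geq \delta$ neighbours of $u$ lie on $C$; (iii) the successors on $C$ of these neighbours, together with $u$, form an independent set (by the usual longest-cycle exchange), of size $\delta + 1 > \delta \geq \alpha(G)$, a contradiction. This cleanly separates the roles of the two hypotheses --- $(n+2)/3$ buys the dominating cycle, $\alpha(G) \leq \delta$ kills the off-cycle vertex --- whereas your sketch tries to extract both from the attachment-point configuration and, as written, cannot close either half.
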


\begin{theorem}\label{m-m}\textbf{(Moon-Moser \cite{MoonMoser})}
Let $G$ be a bipartite graph with two disjoint vertex sets $V_1$ and $V_2$ such that $|V_1|=|V_2|=m$. If $\min \{d(u) + d(v)~|~ u\in V_1, v\in V_2, u \emph{\textsc{ and }} v \emph{\textsc{ are nonadjacent}}\} \geq m+1$, then $G$ is Hamiltonian.
\end{theorem}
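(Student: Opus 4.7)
The plan is to adapt Ore's classical argument (and the closure philosophy of Theorem \ref{bo}) to the balanced bipartite setting. First I would reduce to an edge-maximal counterexample: suppose toward a contradiction that $G$ satisfies the hypothesis but is not Hamiltonian, and repeatedly add cross-edges $uv$ with $u\in V_1$, $v\in V_2$ whose insertion preserves non-Hamiltonicity. Since vertex degrees only rise under edge additions, the Moon-Moser degree-sum condition continues to hold for every remaining non-adjacent pair. So I may assume $G$ itself is edge-maximally non-Hamiltonian while still satisfying the hypothesis. Fix any non-adjacent pair $u\in V_1$, $v\in V_2$; then $G+uv$ is Hamiltonian, and by the maximality of $G$ every Hamiltonian cycle of $G+uv$ must use the edge $uv$. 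Deleting $uv$ from such a cycle produces a Hamiltonian $(u,v)$-path $w_1 w_2\cdots w_{2m}$ in $G$ with $w_1=u$ and $w_{2m}=v$.

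Next I would run the standard rotation on this path. Because consecutive vertices alternate between $V_1$ and $V_2$ and $u\in V_1$, the odd-indexed vertices lie in $V_1$ and the even-indexed ones in $V_2$; in particular every neighbor $w_i$ of $u$ has $i$ even, and its predecessor $w_{i-1}$ lies in $V_1$. If for some such $i$ the vertex $w_{i-1}$ were adjacent to $v$, then
\[
w_1 w_2\cdots w_{i-1}\, w_{2m} w_{2m-1}\cdots w_i\, w_1
\]
would be a Hamiltonian cycle of $G$, a contradiction. Therefore the set $S=\{w_{i-1}\,:\, w_i\in N(u)\}\subseteq V_1$ has size $d(u)$ and is disjoint from $N(v)\subseteq V_1$, so $d(u)+d(v)=|S|+|N(v)|\le |V_1|=m$, which contradicts $d(u)+d(v)\ge m+1$.

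The main point of care is the bipartite bookkeeping: one must verify that every predecessor $w_{i-1}$ of a neighbor of $u$ really lies in $V_1$ (so that both $S$ and $N(v)$ can be bounded against $|V_1|=m$) and that the rotation cycle above is a legitimate bipartite cycle; both facts follow immediately from the index alternation together with the fact that $u$ and $v$ sit on opposite sides. Beyond this, the argument is Ore's proof verbatim with the bound $m$ replacing $n/2$, so no additional idea is needed and the only real work is setting up the edge-maximal reduction and tracking parities.
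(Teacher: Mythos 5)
The paper offers no proof of Theorem \ref{m-m}; it is imported verbatim from Moon and Moser \cite{MoonMoser} and used as a black box in the proof of Lemma \ref{lem-even}, so there is nothing internal to compare your argument against. Judged on its own, your proof is correct and is the classical Ore-type rotation argument transplanted to the balanced bipartite setting: the edge-maximal reduction is sound because adding cross-edges only raises degrees and shrinks the set of non-adjacent cross pairs, so the hypothesis persists and the graph stays bipartite; the parity bookkeeping is right (odd indices in $V_1$, even in $V_2$, so every predecessor $w_{i-1}$ of a neighbour $w_i$ of $u$ lies in $V_1$); the rotation cycle is a genuine Hamiltonian cycle of $G$ whenever $w_{i-1}\in N(v)$; and the sets $S$ and $N(v)$ are disjoint subsets of $V_1$ of sizes $d(u)$ and $d(v)$, giving $d(u)+d(v)\le m$ and the desired contradiction. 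Two small points worth making explicit: the edge-maximal reduction terminates either at a graph with an addable edge (your case) or at $K_{m,m}$, and the latter is Hamiltonian only for $m\ge 2$, so the theorem implicitly requires $m\ge 2$ (harmless here, since the paper applies it with $m=k\ge 4$); and when $i=2$ the ``predecessor'' is $w_1=u$ itself, which is fine because $u\notin N(v)$ by assumption, so it contributes to $S$ without triggering the rotation. Neither affects the validity of the argument.
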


A \emph{k-edge-coloring} of a graph $G$ is an assignment of $k$ colors to edges of $G$, which is represented by a mapping $\chi: E(G) \rightarrow \mathcal{C}$, where $\mathcal{C} = \{c_1, c_2, ... , c_k\}$ is a set of $k$ colors. Given a $k$-edge-coloring of $G$ with $k$ colors $c_1, . . . , c_k$, $c_i(v)$ denotes the set of edges incident to $v$ colored with $c_i$ for $v \in V(G)$, where $1\leq i\leq k$.
A \emph{reload cost function} is a function $\rho : \mathcal{C} \times \mathcal{C} \rightarrow \mathbb{N}_0$ such that for all pairs of colors $c_1, c_2 \in \mathcal{C}$,
\begin{enumerate}
  \item $c_1 = c_2 \Rightarrow \rho(c_1, c_2) = 0$,
  \item $c_1 \neq c_2 \Rightarrow \rho(c_1, c_2) > 0$.
\end{enumerate}
The reload cost incurs at a vertex while traversing two consecutive edges of different colors and $\rho(e_1, e_2) = \rho(\chi(e_1),\chi(e_2))$, where $e_1$ and $e_2$ are incident edges. {The reload cost is said to be \emph{symmetric} if $\rho(e_1, e_2) = \rho(e_2, e_1)$ and to satisfy the \emph{triangle inequality} if $\rho(e_1, e_3) \leq \rho(e_1, e_2) + \rho(e_2, e_3)$ for mutually incident edges $e_1$, $e_2$ and $e_3$.}
The reload cost of a path is the sum of the reload costs that occur at its internal vertices, i.e., $\rho(P) = \rho(e_{1},e_2) + \rho(e_{2},e_3) + \cdots + \rho(e_{n-1},e_n)$, where $P = (e_1 - e_2 - \cdots - e_n)$ is a path of length $n-1$. The reload cost of a cycle is $\rho(C)=\rho(e_{1},e_2) + \rho(e_{2},e_3) + \cdots + \rho(e_{n-1},e_n) + \rho(e_n,e_1)$, where $C$ is a cycle consisting of edges $e_1, e_2,\dots, e_n$ in this cyclic order. Note that a monochromatic path or cycle, i.e., a path or cycle having all edges with the same color, clearly has zero reload cost. Besides, the reload cost of a cycle cover is the sum of the reload costs of each cycle component of the cycle cover, i.e., $\rho(\textbf{C}) = \rho(C_1) + \rho(C_2) + \rho(C_3) + \cdots + \rho(C_n)$, where $\textbf{C}= C_1 + C_2 + \cdots + C_n$.

The \emph{Minimum Reload Cost Cycle Cover} (\textsc{MinRC3}) problem is an optimization problem which aims to span all vertices of an edge-colored graph by a set of vertex-disjoint cycles with minimum reload cost. Formally,

\begin{tcolorbox}
\textsc{MinRC3} $(G, \mathcal{C}, \chi, \rho)$ \\
\textbf{Input:} A graph $G = (V, E)$ with an edge coloring function $\chi : E \rightarrow \mathcal{C}$ and a reload cost function
$\rho : \mathcal{C} \times \mathcal{C} \rightarrow \mathbb{N}_0$. \\
\textbf{Output:} A cycle cover $\textbf{C}$ of $G$. \\
\textbf{Objective:} Minimize $\rho(\textbf{C})$.
\end{tcolorbox}
\noindent
The previous results on  \textsc{MinRC3} are as follows:
\begin{theorem}\label{gua1}\textbf{(Galbiati et al. \cite{GalbiatiGualandiMaffioli})}
\textsc{MinRC3} is strongly \nph even if the number of colors is $2$, the reload costs are symmetric, and satisfy the triangle inequality.
\end{theorem}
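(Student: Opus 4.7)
The plan is a polynomial-time reduction from a classical strongly NP-hard problem, the most natural candidate being \textsc{Hamiltonian Cycle} on a restricted class such as cubic graphs. With only two colors the reload function is completely determined by a single symmetric positive value $\rho(c_1,c_2)=\rho(c_2,c_1)=\alpha>0$, with $\rho(c_i,c_i)=0$, so symmetry and the triangle inequality hold automatically. It therefore suffices to show that deciding whether the optimum equals zero is already \nph.

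The guiding reformulation I would exploit is: in a $2$-edge-colored graph $G$, the minimum reload cost equals $0$ if and only if $V(G)$ admits a partition $V_1 \sqcup V_2$ such that the color-$c_1$ subgraph restricted to $V_1$ and the color-$c_2$ subgraph restricted to $V_2$ each admit a spanning $2$-factor whose components are cycles of length at least three. The reduction must therefore encode a hard Hamiltonian-type instance as a question about the existence of such a bipartition.

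Given an instance $G=(V,E)$ of \textsc{Hamiltonian Cycle}, I would construct $G'$ in three layers. First, keep $V$ and color each $e\in E$ with color $c_1$; subdivide each such edge by a private auxiliary vertex so that every color-$c_1$ cycle of $G'$ corresponds bijectively to a cycle of $G$. Second, attach to each vertex of $V$ and to each subdivision vertex a small forcing gadget colored with $c_2$, designed so that the gadget admits a monochromatic $c_2$-cover only when its anchor is left free to be covered in color $c_1$. Third, add $c_2$-edges between vertices of $V$ only sparsely, so that the color-$c_2$ subgraph on $V$ contains no spanning $2$-factor. The forward direction is immediate: a Hamiltonian cycle of $G$ lifts to a monochromatic $c_1$-cycle through $V$ and its subdivisions, while each gadget is then covered in color $c_2$, giving total reload cost zero. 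The converse rests on the forcing gadgets to push the $c_1$-portion of any zero-cost cover to be a $2$-factor of the subdivided copy of $G$.

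The main obstacle I foresee lies in the last step: ensuring that a NO-instance of \textsc{Hamiltonian Cycle} cannot be imitated by a zero-cost cover that uses several shorter monochromatic $c_1$-cycles in place of a single Hamiltonian one. A clean way around this is either to reduce from \textsc{Hamiltonian Cycle} in a graph class where any $2$-factor is necessarily a single Hamiltonian cycle, or to augment $G'$ with a global hub gadget whose monochromatic completion forces the $c_1$-cycles to merge. Once this point is settled, all numerical parameters are of unit size and the reduction is manifestly polynomial, yielding strong NP-hardness of \textsc{MinRC3} under the claimed restrictions.
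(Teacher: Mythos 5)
The statement you are trying to prove is not proved in this paper at all --- it is imported verbatim from Galbiati, Gualandi and Maffioli \cite{GalbiatiGualandiMaffioli}, so there is no in-paper argument to compare against. Judged on its own terms, your proposal is not yet a proof: it is a reduction template whose central difficulty is left open, and that difficulty is not a technicality. Your preliminary observations are fine --- with two symmetric colours the triangle inequality is automatic, and zero reload cost is equivalent to partitioning $V$ into a part coverable by $c_1$-cycles and a part coverable by $c_2$-cycles --- but the construction built on them does not close.

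The gap is exactly the one you flag in your final paragraph, and as stated it is fatal. Even granting the unspecified forcing gadgets, your construction only forces the colour-$c_1$ part of a zero-cost cover to be a vertex-disjoint cycle cover, i.e.\ a $2$-factor, of the subdivided copy of $G$; in a simple graph every cycle of a $2$-factor already has length at least three, and $2$-factor existence is polynomial-time decidable via the classical reduction to perfect matching. A reduction whose YES-instances are characterised by ``$G$ has a $2$-factor'' therefore cannot establish NP-hardness. Neither of your proposed repairs survives scrutiny: (a) on any graph class in which every $2$-factor is a single Hamiltonian cycle, Hamiltonicity coincides with $2$-factor existence and is hence itself polynomial, so there is no hard problem to start from; (b) a ``global hub'' lies on only one cycle of a vertex-disjoint cover, so it can absorb at most one $c_1$-cycle and cannot force several disjoint $c_1$-cycles to merge, and no stronger hub gadget is described. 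The hardness in the cited result has to be driven by the interaction between the two colour classes (for instance, by encoding a choice or assignment in which colour covers each vertex), not by the Hamiltonicity of a single colour class. Finally, the forcing gadgets themselves are never exhibited, and the subdivision-vertex gadgets would need to admit a $c_2$-cover both when their anchor is used by a $c_1$-cycle and when it is not; until these are constructed and both directions of the equivalence verified, the argument remains a plan rather than a proof.
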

\begin{corollary}\label{gua2}\textbf{(Galbiati et al. \cite{GalbiatiGualandiMaffioli})}
\textsc{MinRC3} is not approximable within $1/\epsilon$ , for any $\epsilon > 0$, even if the number of colors is $2$, the reload costs are symmetric, and satisfy the triangle inequality.
\end{corollary}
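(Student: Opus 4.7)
The plan is to extract this corollary directly from the reduction that proves Theorem~\ref{gua1}, by converting its strong \textsc{NP}-hardness into an inapproximability gap. The standard template here is: if for a minimization problem whose optimum takes integer values it is \nph to distinguish instances of optimum $0$ from instances of optimum $\geq 1$, then no polynomial-time algorithm can achieve any finite approximation ratio (assuming \hyp). Reload cost is particularly well-suited to this template because a monochromatic cycle cover always has cost exactly $0$, so the zero side of the gap is entirely natural.

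Concretely, I would first inspect the reduction behind Theorem~\ref{gua1} and check that it produces a sharp ``zero versus positive'' gap: \textsc{Yes}-instances admit a monochromatic (hence zero-cost) cycle cover, whereas every cycle cover of a \textsc{No}-instance must use two distinctly colored edges at some internal vertex and therefore pay at least the minimum positive reload cost, which we may normalize to $1$. Granting this gap, the inapproximability follows instantly: if for some fixed $\epsilon > 0$ a polynomial-time algorithm $A$ returned a cycle cover of cost at most $(1/\epsilon)\cdot\mathrm{OPT}$, then on a \textsc{Yes}-instance $A$'s output would have cost $\leq (1/\epsilon)\cdot 0 = 0$, while on a \textsc{No}-instance it would have cost $\geq \mathrm{OPT} \geq 1 > 0$; comparing this cost against $0$ would decide the \nph problem from Theorem~\ref{gua1} in polynomial time, contradicting \hyp. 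Since the auxiliary conditions (two colors, symmetric reload costs, triangle inequality) are already baked into Theorem~\ref{gua1}, they transfer to the corollary for free, and because $\epsilon>0$ is arbitrary we obtain the ``for any $\epsilon>0$'' statement.

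The main obstacle is therefore not the gap-to-inapproximability amplification, which is essentially routine, but the verification that the explicit construction used for Theorem~\ref{gua1} genuinely attains $\mathrm{OPT} = 0$ on \textsc{Yes}-instances rather than only some bounded positive value. If the gap were merely between two positive constants, one would obtain some finite inapproximability ratio but not the full ``$1/\epsilon$ for any $\epsilon>0$'' conclusion. Thus the whole argument hinges on exhibiting an all-monochromatic optimal cycle cover on the \textsc{Yes} side of the reduction.
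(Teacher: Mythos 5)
This corollary is quoted from Galbiati et al.\ \cite{GalbiatiGualandiMaffioli} and the paper gives no proof of its own, so there is nothing internal to compare against; your gap argument (zero-versus-positive optimum from the hardness reduction, hence no finite ratio under \hyp) is the standard derivation and is exactly how the cited source obtains it. You also correctly isolate the one point that actually needs checking, namely that the \textsc{Yes}-instances of the reduction behind Theorem~\ref{gua1} admit a genuinely monochromatic, hence zero-cost, cycle cover.
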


A monochromatic cycle cover is composed of cycles such that the colors of the edges of a particular cycle are the same; however, the colors of edges in different cycles may differ in general. In this work, we investigate the \textsc{MinRC3} problem in equitably or nearly equitably $2$-edge-colored complete graphs and prove that the minimum reload cost is zero in such graphs except for some special cases by constructing a monochromatic cycle cover in a single color.

\section{\textsc{MinRC3} in Complete Graphs}

By Theorem \ref{gua1} and Corollary \ref{gua2}, we already know that the \textsc{MinRC3} problem is \nph in the strong sense and not approximable within $1/\epsilon$ for any $\epsilon > 0$, even when the number of colors is $2$. In the following theorem, we prove a hardness result for complete graphs:

\begin{theorem}\label{cor:inapproxMinRC3}
The \textsc{MinRC3} problem is strongly \nph  and not approximable within $1/\epsilon$ for any $\epsilon > 0$ for complete graphs even if the reload costs are symmetric.
\end{theorem}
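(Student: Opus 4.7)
The plan is to reduce from the general-graph version of \textsc{MinRC3} in Theorem~\ref{gua1} to the complete-graph version. Given an arbitrary instance $(G,\{c_1,c_2\},\chi,\rho)$ with symmetric $\rho$ on a graph $G$ of order $n$, I would build a complete graph $K_n$ on the same vertex set, keep the original colors $c_1,c_2$ on edges of $G$, and assign a distinct brand-new color $c_e$ to each edge $e \in E(K_n)\setminus E(G)$. The extended cost function $\rho'$ agrees with $\rho$ on pairs from $\{c_1,c_2\}$, and any pair of distinct colors involving at least one new color gets a common large value $M$. Since $\rho$ is symmetric and the new penalty is symmetric in its two arguments, $\rho'$ is symmetric as required; no triangle inequality is needed, which is consistent with the weaker hypothesis in Theorem~\ref{cor:inapproxMinRC3}.

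I would then prove that the two \textsc{MinRC3} optima coincide. Choose $M$ strictly larger than any possible reload cost of a cycle cover on $G$; since a $2$-factor has exactly $n$ edges and hence at most $n$ reload transitions, the bound $M > n\cdot\rho(c_1,c_2)$ suffices. Because each completion edge carries a unique color, any cycle of $K_n$ that contains a completion edge $e=uv$ must have a color change at both $u$ and $v$, contributing at least $2M$ to its reload cost. Hence every cycle cover of $K_n$ that uses a completion edge costs strictly more than the cheapest cycle cover restricted to $G$-edges. But the cycle covers of $K_n$ using only $G$-edges are exactly the cycle covers of $G$, with identical reload cost, so the two minima agree. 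If the hard instances produced by Theorem~\ref{gua1} do not already admit a cycle cover, a small gadget of $O(1)$ extra vertices forming a monochromatic cycle attached to $G$ by $c_1$-edges can be prepended without affecting the $0$-vs-positive gap.

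The main obstacle is preventing the extra edges from enabling cheap monochromatic cycle covers. Coloring all completion edges with a single new color $c_3$ would be tempting but could ruin the reduction: whenever the complement $\overline{G}$ admits a vertex-disjoint cover by cycles of length $\geq 3$, a zero-cost solution unrelated to the input would appear. Assigning a distinct color to each completion edge sidesteps this issue, since no two completion edges share a color and therefore no monochromatic cycle can contain more than one such edge. The reduction runs in polynomial time and preserves both the exact optimum and, crucially, the gap between $0$ and positive values, so strong \textsc{NP}-hardness is inherited from Theorem~\ref{gua1} and the $1/\epsilon$ inapproximability follows from Corollary~\ref{gua2}, yielding Theorem~\ref{cor:inapproxMinRC3}.
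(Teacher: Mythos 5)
Your proposal is correct and follows essentially the same route as the paper: complete the graph by adding each missing edge with its own brand-new color whose reload cost against every other color is a prohibitively large $M$, so that the optimum is preserved and the $0$-versus-positive gap carries over from Theorem~\ref{gua1} and Corollary~\ref{gua2}. If anything, you are more explicit than the paper about how large $M$ must be and why a \emph{distinct} color per completion edge is needed (each such edge then forces two reloads of cost $M$), points the paper leaves implicit.
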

\begin{proof}
The proof is by reduction from the problem itself. Given an instance $I$ of $\textsc{MINRC3}$ $(G, \mathcal{C}, \chi, \rho)$, where $G=(V(G), E(G))$, we construct an instance $I'$ of $\textsc{MINRC3}$ $(G', \mathcal{C}', \chi', \rho')$ as follows: $G'=(V(G), E(G'))$ is a complete graph such that $E(G')=E(G) \cup (\cup_{uv \notin E(G)} uv)$ and $\mathcal{C'}=\mathcal{C} \cup (\cup_{uv \notin E(G)}\chi(uv))$. For all $uv \notin E(G)$ and $c \in \mathcal{C'} \setminus \chi(uv)$, we set $\rho(\chi(uv), c) = \rho(c,\chi(uv))=M$, where $M$ is a very large integer. In other words, for every $uv\notin E(G)$, $\chi(uv)$ is a new color in $G'$ having a very large reload cost value with all other colors in $\mathcal{C}'$. This reduction shows that $I$ is a satisfiable instance of $\textsc{MINRC3}$ if and only if $I'$ is a satisfiable instance of $\textsc{MINRC3}$ in complete graphs. Furthermore, in the case where $I$ is a satisfiable instance, we clearly have $OPT(G)=OPT(G')$, where $OPT(G)$ and $OPT(G')$ denote the reload cost of an optimum solution of $G$ and $G'$, respectively. Let $A'$ be a $1/\epsilon$ approximation algorithm for the \textsc{MinRC3} problem in complete graphs for some $\epsilon >0$. Then since $A'(I') \le (OPT(I')/\epsilon)=OPT(I)/\epsilon$, $I'$ is also a $1/\epsilon$ approximation algorithm for the \textsc{MinRC3} problem in general, contradicting Corollary \ref{gua2}. Hence, the theorem holds.
\end{proof}

Having proved that \textsc{MinRC3} is in general inapproximable within $1/\epsilon$ for any $\epsilon > 0$ in complete graphs, now  we investigate \textsc{MinRC3} in complete graphs with equitable and nearly equitable $2$-edge-colorings.

\subsection{Complete Graphs with Equitable $2$-Edge-Coloring}\label{subsec:equitable}
    Since a monochromatic cycle cover has zero reload cost, it is sufficient to show that there exists a partition of vertices of a complete graph having an equitable $2$-edge-coloring into monochromatic vertex-disjoint cycles.

The following lemma given in \cite{Werra} implies the existence of an equitable $2$-edge-coloring in complete graphs except $K_{4k+3}$, where $k \ge 0$:

\begin{lemma}\label{lem-werra}
\textbf{\cite{Werra}} A connected graph $G$ has an equitable $2$-edge-coloring if and only if it is not a connected graph with odd number of edges and all vertices having an even degree.
\end{lemma}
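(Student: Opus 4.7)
The plan is to prove the two directions of the biconditional separately, using a short parity argument for necessity and the classical Eulerian-tour alternation for sufficiency.

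For necessity, I would assume for contradiction that $G$ is connected, every vertex of $G$ has even degree, $|E(G)|$ is odd, and that $\chi$ is an equitable $2$-edge-coloring. At each vertex $v$ one has $|c_1(v)|+|c_2(v)|=d(v)$ with $d(v)$ even and $||c_1(v)|-|c_2(v)||\leq 1$, which forces $|c_1(v)|=|c_2(v)|=d(v)/2$. Summing the color-$1$ degrees and applying the handshake identity yields $|E(G(c_1))|=|E(G)|/2$, contradicting that $|E(G)|$ is odd.

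For sufficiency, I plan to build an auxiliary graph $G^{*}$ whose vertex degrees are all even, and to read off the coloring from an Euler circuit of $G^{*}$. If $G$ has at least one odd-degree vertex, I introduce a new vertex $w$ and join it by an edge to every odd-degree vertex of $G$; the resulting graph $G^{*}$ is connected, all its degrees are even (since the number of odd-degree vertices of $G$ is even), and so $G^{*}$ admits an Euler circuit. Starting such a circuit at $w$, I list its edges in traversal order as $e_{1},e_{2},\ldots,e_{m}$ and set $\chi(e_{i})=c_{1}$ when $i$ is odd and $\chi(e_{i})=c_{2}$ when $i$ is even. Two edges consecutive along the tour receive different colors, so at every internal visit to a vertex the two edges used lie in different color classes. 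Hence a vertex $v\in V(G)$ of even degree is not adjacent to $w$ and is visited $d(v)/2$ times, yielding $|c_{1}(v)|=|c_{2}(v)|=d(v)/2$; a vertex $v\in V(G)$ of odd degree is adjacent to $w$ through the unique edge $vw$, is visited $(d(v)+1)/2$ times, receives one edge of each color at every visit, and after excising $vw$ retains a color imbalance of exactly one. If instead $G$ has no odd-degree vertex, the hypothesis forces $|E(G)|$ to be even, and I would simply alternately color an Euler circuit of $G$ itself: internal visits behave as above, and the wrap-around pair $(e_{m},e_{1})$ at the starting vertex is bichromatic because $m=|E(G)|$ is even, making every vertex perfectly balanced.

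The main obstacle is the cyclic wrap-around of the Euler circuit: the first and last edges share the starting vertex, and if the circuit length is odd they inherit the same color and destroy equitability at that vertex, which is exactly the parity obstruction we ruled out in the necessity direction. The remedy is to arrange for the wrap-around vertex to be either the auxiliary $w$ (whose equitable profile is irrelevant, since $w\notin V(G)$) or a vertex of a graph whose edge count is known to be even. The only additional bookkeeping requiring care is showing that removing the unique edge $vw$ from the $G^{*}$-count at an odd-degree vertex shifts the color balance by exactly one; this holds because each visit to $v$ contributes one edge of each color, so the $d(v)+1$ edges of $G^{*}$ incident to $v$ split evenly between the two colors, and deleting $vw$ moves exactly one of the two counts down by one.
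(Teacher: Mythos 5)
Your proof is correct: the parity argument for necessity and the Euler-circuit alternation (with the auxiliary vertex $w$ attached to the odd-degree vertices, and the wrap-around absorbed either at $w$ or at a vertex of a graph with an even number of edges) is exactly the classical argument behind de Werra's theorem. The paper states this lemma only as a citation to \cite{Werra} without reproducing a proof, so there is nothing to contrast with; your write-up supplies the standard proof and handles the one delicate point (the first and last edges of the circuit sharing the start vertex) correctly.
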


\begin{corollary}\label{lem-odd3no}
A complete graph has an equitable $2$-edge-coloring if and only if it is not a complete graph $K_{4k+3}$ with $k \geq 0$.
\end{corollary}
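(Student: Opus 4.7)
The plan is to obtain Corollary \ref{lem-odd3no} as an immediate specialization of Lemma \ref{lem-werra} to the family $\{K_n\}_{n \geq 1}$. Since $K_n$ is connected for every $n$, Lemma \ref{lem-werra} asserts that $K_n$ fails to admit an equitable $2$-edge-coloring if and only if $|E(K_n)|$ is odd \emph{and} every vertex of $K_n$ has even degree. Hence the whole task reduces to characterizing those $n$ for which both of these parity conditions hold simultaneously.

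I would first dispense with the degree condition. Every vertex of $K_n$ has degree $n-1$, so the requirement that all degrees be even forces $n$ to be odd. Under this restriction I would then examine the edge count: writing $n = 2m+1$, one obtains $|E(K_n)| = \binom{n}{2} = m(2m+1)$, whose parity agrees with that of $m$ (since $2m+1$ is odd). Consequently, the requirement that the number of edges also be odd further demands $m$ to be odd, i.e.\ $m = 2k+1$ for some integer $k \geq 0$, giving precisely $n = 4k+3$.

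Putting the two parity restrictions together, the "bad" complete graphs in the sense of Lemma \ref{lem-werra} are exactly those of the form $K_{4k+3}$ with $k \geq 0$, which is the content of the corollary. I do not expect any genuine obstacle here: the argument is a one-line parity computation layered on top of the cited result. The only care required is to track the two conditions (odd edge count and uniformly even degree) separately before intersecting them, rather than conflating the parities of $n$, $n-1$, and $\binom{n}{2}$.
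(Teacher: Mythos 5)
Your argument is correct and is exactly the intended derivation: the paper states the corollary as an immediate consequence of Lemma \ref{lem-werra} without spelling out the parity computation, and your computation (even degree forces $n$ odd; writing $n=2m+1$, the edge count $m(2m+1)$ is odd precisely when $m$ is odd, i.e.\ $n=4k+3$) is the straightforward way to fill it in. No gaps.
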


Now, we analyze the cases of $K_n$ for even $n$ and odd $n$ separately. For odd $n$, by Corollary \ref{lem-odd3no} it suffices to examine complete graphs with order $n=4k+1$ . The following lemma shows that a complete graph $K_{4k+1}$ having an equitable $2$-edge-coloring has a monochromatic Hamiltonian cycle in both colors:

    \begin{lemma}\label{lem-odd1}
    For a complete graph $K_{4k+1}$, where $k \geq 1$, with an equitable $2$-edge-coloring, there exists a monochromatic cycle cover of the form $C_{4k+1}$ for both colors; in other words, there exist monochromatic Hamiltonian cycles in both colors.
    \end{lemma}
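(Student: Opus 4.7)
The plan is to apply Theorem \ref{buyuk} to each monochromatic subgraph $G(c_i)$, after first establishing that $G(c_i)$ is a connected regular graph with the correct minimum degree.

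First I would observe that every vertex in $K_{4k+1}$ has degree $4k$, which is even. Combined with the equitable constraint $||c_1(v)|-|c_2(v)|| \leq 1$, this forces $|c_1(v)|$ and $|c_2(v)|$ to have the same parity (since their sum is even), so their difference must also be even; hence the difference is actually $0$, and we get the exact balance $|c_1(v)| = |c_2(v)| = 2k$ for every vertex $v$. Therefore each color subgraph $G(c_i)$ is a $2k$-regular graph on $n = 4k+1$ vertices.

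Next, I would show that $G(c_i)$ is connected. Since $G(c_i)$ is $2k$-regular, every connected component of $G(c_i)$ has at least $2k+1$ vertices. If $G(c_i)$ had at least two components, then $|V(G(c_i))| \geq 2(2k+1) = 4k+2 > 4k+1$, a contradiction. Thus $G(c_i)$ is connected. Since $\delta(G(c_i)) = 2k = \lfloor n/2 \rfloor$, the hypothesis of Theorem \ref{buyuk} is satisfied.

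Finally, I would rule out the two exceptional graphs in Theorem \ref{buyuk}. In the first exception $K_{\lceil n/2\rceil} \cup K_{\lceil n/2\rceil}$ glued at one common vertex (with $\lceil n/2\rceil = 2k+1$), the common vertex has degree $4k$ while every other vertex has degree $2k$, so this graph is not $2k$-regular and cannot coincide with $G(c_i)$. In the second exception $\bar{K}_{2k+1} \vee G_{2k}$, every vertex of the independent part $\bar{K}_{2k+1}$ has degree exactly $2k$, but every vertex of the joined $G_{2k}$-part has degree at least $2k+1$ (already adjacent to all $2k+1$ independent vertices), so this graph is also not $2k$-regular. Both exceptions are thus incompatible with the regularity of $G(c_i)$, so $G(c_i)$ is Hamiltonian and its Hamiltonian cycle is the desired monochromatic cover $C_{4k+1}$ in color $c_i$. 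Applying the same argument to each of the two colors completes the proof.

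There is no serious obstacle here: the only point that requires any care is the parity argument that upgrades the equitable inequality to exact regularity, after which the Hamiltonicity follows directly from Theorem \ref{buyuk} once the two structural exceptions are eliminated by the regularity of $G(c_i)$.
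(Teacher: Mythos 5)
Your proof is correct and follows essentially the same route as the paper: parity forces each color class to be exactly $2k$-regular, a counting argument on component sizes gives connectivity, and Theorem \ref{buyuk} applies once the two exceptional graphs are excluded. Your explicit degree computations ruling out the two exceptions are a welcome bit of extra detail that the paper only asserts.
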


    \begin{proof}
    Let $\chi$ be an equitable $2$-edge-coloring in the complete graph $K_{4k+1}$, $k \geq 1$, with colors, say red and blue. In an equitable $2$-edge-coloring of $K_{4k+1}$, each vertex is incident to $2k$ red edges and $2k$ blue edges. Consider the color induced subgraphs $K_{4k+1}^r$ and $K_{4k+1}^b$ in $K_{4k+1}$ for red and blue colors, respectively. Both $K_{4k+1}^r$ and $K_{4k+1}^b$ are $2k$-regular graphs on $4k+1$ vertices. Note that a $2k$-regular graph on $4k+1$ vertices cannot be disconnected because otherwise each component has to have at least $2k+1$ vertices, contradicting the specified order $4k+1$. Both $K_{4k+1}^r$ and $K_{4k+1}^b$ are connected $2k$-regular graphs on $4k+1$ vertices. Hence, they have Hamiltonian cycles by Theorem \ref{buyuk} since they are neither $K_{2k+1} \cup K_{2k+1}$ with one common vertex nor $\bar{K}_{2k+1} \vee G_{k}$. Therefore, there exists a monochromatic cycle cover of the form $C_{4k+1}$ in both colors, as desired.
    \end{proof}

 For even $n \ge 6$, the following lemma shows that a complete graph $K_{n}$ having an equitable $2$-edge-coloring has a monochromatic cycle cover with at most two cycles having the same size and the same color:

    \begin{lemma}\label{lem-even}
    For a complete graph $K_{2k}$, $k \geq 3$, with an equitable $2$-edge-coloring, there exists a monochromatic cycle cover in a single color of the form $C_k + C_k$ or $C_{2k}$. In particular, there exists a cycle cover $C_k + C_k$  in a single color if $K_{2k}$ has a disconnected or $1$-connected color induced subgraph, and a Hamiltonian cycle $C_{2k}$ otherwise.
    \end{lemma}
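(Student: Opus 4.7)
My plan is a case analysis on the connectivity of the two color subgraphs $R$ (red) and $B$ (blue). Because $K_{2k}$ is $(2k-1)$-regular and the coloring is equitable, every vertex has red- and blue-degree in $\{k-1, k\}$; let $V_1$ (resp.\ $V_2$) denote the vertices of red-degree $k-1$ (resp.\ $k$), so that $\delta(R), \delta(B) \ge k-1$. A parity count on $\sum_v d_R(v) = 2k^2 - |V_1|$ shows that both $|V_1|$ and $|V_2|$ are even.

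Assume first that $R$ is disconnected or $1$-connected (the argument for $B$ is symmetric). If $R$ is disconnected, the bound $\delta(R) \ge k-1$ forces each component to have at least $k$ vertices, so $R$ has exactly two components of size $k$; extremality then makes each a red $K_k$, giving $R = 2K_k$ and the cover $C_k + C_k$. If instead $R$ has a cut vertex $v$, then $R-v$ splits into two components of sizes $k-1$ and $k$ (the degree bound rules out more components and fixes the sizes), and a parallel extremal argument shows that $R$ consists of two vertex-disjoint red $K_k$'s joined by the single bridge edge incident to $v$. In both cases each $K_k$ (with $k \ge 3$) supplies a Hamilton cycle $C_k$, yielding the cover $C_k + C_k$.

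When both $R$ and $B$ are $2$-connected, I want a monochromatic Hamilton cycle $C_{2k}$. If $\delta(R) = k$, then $R$ is $k$-regular on $2k$ vertices and Dirac's theorem (Theorem \ref{di}) applies; symmetrically if $\delta(B) = k$. Otherwise $\delta(R) = \delta(B) = k-1$, and both $V_1, V_2$ are nonempty. Applying Chv\'atal's degree-sequence criterion (Theorem \ref{ch}) to $R$, the only index $m < n/2$ at which the sufficient condition can fail is $m = k-1$, and a direct check shows Hamiltonicity of $R$ whenever $|V_1| \le k-2$; symmetrically $B$ is Hamiltonian when $|V_1| \ge k+2$. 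Combined with the evenness of $|V_1|$, the remaining boundary range is $|V_1| = k$ (for $k$ even) and $|V_1| \in \{k-1, k+1\}$ (for $k$ odd).

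These boundary cases are the main obstacle, since Chv\'atal's theorem just barely misses. I would settle them by iterating the Bondy--Chv\'atal closure (Theorem \ref{bo}) beyond what Chv\'atal's sufficient condition alone guarantees: because $|V_2| \ge k$, the set $V_2$ cannot already be a red clique in this range (otherwise each vertex of $V_2$ would have no red neighbor in $V_1$, forcing an impossible degree configuration among the $|V_1| \le k+1$ vertices of $V_1$), so $V_2$ contains non-adjacent red pairs whose degree sum already meets $n = 2k$ and can be added to $cl(R)$. The raised degrees then admit further closures: first on cross-edges between $V_1$ and $V_2$, and eventually on the remaining non-edges within $V_1$. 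A bookkeeping argument leveraging the rigid degree sequence and $2$-connectivity shows $cl(R) = K_{2k}$, so $R$ is Hamiltonian by Theorem \ref{bo}, completing the proof.
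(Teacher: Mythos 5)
Your overall architecture is reasonable and in one respect tidier than the paper's: by splitting on connectivity you actually prove the ``$1$-connected $\Rightarrow C_k+C_k$'' clause directly (the forced structure of two red $K_k$'s joined by a bridge is correct, as is the extremal analysis of the disconnected case and the parity of $|V_1|$), whereas the paper's proof only reaches that conclusion implicitly. Your use of Dirac for the regular case and of Chv\'atal's criterion to reduce the non-regular case to $|V_1|\in\{k-1,k,k+1\}$ is also correct.

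The gap is in your final step, and it is fatal as written. The claim that ``$V_2$ cannot already be a red clique'' when $|V_2|\ge k$ is false for $|V_2|=k$: a vertex of a red $k$-clique has $k-1$ red neighbours inside the clique and red degree $k$, hence exactly one (not zero) red neighbour in $V_1$, and this configuration is realizable. For $k=4$, let $V_2$ induce $K_4$, let $V_1$ induce a $4$-cycle, and join them by a perfect matching: every vertex of $V_2$ has red degree $4$, every vertex of $V_1$ has red degree $3$, both colour classes are $2$-connected, yet every non-adjacent pair in $R$ has degree sum at most $2k-1=7$, so $cl(R)=R$ and Bondy--Chv\'atal applied to $R$ yields nothing. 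This is precisely the hard case of the lemma: the paper shows that when the degree-$k$ vertices of $R$ form a clique of size exactly $k$ (which forces $k$ even), one must pass to the complement $B$, delete the $k/2$ edges inside the degree-$k$ class of $B$ to obtain a balanced spanning bipartite subgraph with part degrees at least $k-1$ and $k/2$, and invoke Moon--Moser (Theorem \ref{m-m}) to get a Hamiltonian cycle in $B$. Moreover, even outside the clique case your ``bookkeeping argument'' that $cl(R)=K_{2k}$ is asserted rather than carried out, and it is not automatic: after the first pair of degree-$k$ vertices becomes universal in the closure, any degree-$(k-1)$ vertex already adjacent to both of them gains nothing, so the iteration can stall. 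You need either the Moon--Moser step (or some substitute) for the clique case, and an actual argument that the closure of the appropriate colour class completes in the remaining boundary cases.
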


    \begin{proof}
    Let $\chi$ be an equitable $2$-edge-coloring of the complete graph $K_{2k}$  with red and blue colors. In an equitable $2$-edge-coloring of $K_{2k}$, each vertex is incident to either $k$ red edges and $k-1$ blue edges or $k$ blue edges and $k-1$ red edges. We consider the subgraphs $K_{2k}^r$ and $K_{2k}^b$ in $K_{2k}$ induced by red and blue colors, respectively. For both of them, the minimum degree is at least $k-1$ and the maximum degree is at most $k$; i.e., $k-1 \leq \delta \leq \Delta \leq k$.

    Note that the only disconnected graph on $2k$ vertices with $\delta \geq k-1$  is the disjoint union of two $K_k$, i.e., $K_k + K_k$, which is a $(k-1)$-regular graph. Let $K_{2k}^r$ be a disconnected graph on $2k$ vertices with $\delta \geq k-1$; i.e., $K_{2k}^r \cong K_k + K_k$. Since both components of $K_{2k}^r$ are complete graphs, $K_{2k}^r$ has a cycle cover of the form $C_k + C_k$. On the other hand, $K_{2k}^b$ is a complete bipartite graph $K_{k,k}$ since it is the complement of $K_k + K_k$. Clearly, $K_{k,k}$ has a Hamiltonian cycle by Theorem \ref{di}.  In this case, we therefore have a monochromatic cycle cover in both induced subgraphs $K_{2k}^r$ and $K_{2k}^b$.

    We now suppose that both $K_{2k}^r$ and $K_{2k}^b$ are connected graphs with $\delta \geq k-1$ and $\Delta \leq k$. Assume that  both $K_{2k}^r$ and $K_{2k}^b$ are regular graphs, i.e., $k-1 \leq \delta = \Delta \leq k$ for both graphs. Hence, one of the graphs is a $k$-regular graph on $2k$ vertices, whereas the other is a $(k-1)$-regular graph on $2k$ vertices. By Theorem \ref{di}, there is a Hamiltonian cycle on the $k$-regular graph on $2k$ vertices. Therefore, in the case where color induced subgraphs are regular, we have a monochromatic cycle cover of the form $C_{2k}$ in the $k$-regular subgraph induced by one of the colors.

    We then consider the case where both $K_{2k}^r$ and $K_{2k}^b$ are connected and are not regular graphs. In other words, both of them have $\delta = k-1$ and $\Delta = k$. Let the degree sequences of $K_{2k}^r$ and $K_{2k}^b$ be $(r_1, r_2, ..., r_{2k})$ and  $(b_1, b_2, ..., b_{2k})$, respectively, where $r_1 \leq r_2 \leq \cdots \leq r_{2k}$ and $b_1 \leq b_2 \leq \cdots \leq b_{2k}$. Since $\delta = k-1$ and $\Delta = k$ for both graphs, we have $k-1 \leq r_i, b_i \leq k$ for $1 \leq i \leq 2k$. Notice that at least half of the vertices in one of the induced subgraphs $K_{2k}^r$ and $K_{2k}^b$ have degree $k$, since these subgraphs are complements of each other. Without loss of generality (w.l.o.g.), assume that at least half of the vertices of $K_{2k}^r$ have degree $k$, i.e., $r_i = k$ for $i \geq k+1$. Let us consider the degrees of the remaining vertices, i.e., vertices of degree $k-1$, in $K_{2k}^r$:
    \begin{enumerate}
    \item Assume that the number of vertices having degree $k-1$ is less than $k-1$, i.e., $r_{k-1} = k$. Then the closure of $K_{2k}^r$ is a complete graph; therefore, by Theorem \ref{bo} $K_{2k}^r$ has a Hamiltonian cycle.
    \item  Assume that the number of vertices having degree $k-1$ is at least $k-1$, i.e., $r_{k-1} = k-1$. If there exists a pair of nonadjacent vertices $u$ and $v$ both having degree $k$, then the closure of $K_{2k}^r$ must contain the edge $uv$ by definition. The degrees of $u$ and $v$ become $k+1$ and then they must be adjacent to all other vertices in the closure of $K_{2k}^r$. Iteratively adding edges between nonadjacent vertices whose degrees sum to at least $n=2k$, we obtain the complete graph $K_{2k}$ as the closure of $K_{2k}^r$. Then, $K_{2k}^r$ has a Hamiltonian cycle by Theorem \ref{bo}. Otherwise, i.e., there is no pair of nonadjacent vertices both having degree $k$, then all vertices having degree $k$ are adjacent to each other. It follows that the vertices of degree $k$ form a clique of size $k$ or $k+1$ in $K_{2k}^r$ depending on the value of $r_k$. If the vertices of degree $k$ form a clique of size $k+1$ in $K_{2k}^r$, i.e., $r_k = k$, it contradicts the fact that $K_{2k}^r$ is a connected graph. Then, there are $k$ vertices having degree $k-1$ in $K_{2k}^r$, i.e., $r_k = k-1$. Hence, we deduce that there are $k$ vertices having degree $k$ in $K_{2k}^r$ and these vertices form a clique of size $k$ in $K_{2k}^r$. It implies that all vertices having degree $k-1$ in $K_{2k}^b$ form an independent set of size $k$ in $K_{2k}^b$. Besides, $k$ must be even in order to satisfy the relation {$2E(K_{2k}^r) = \sum_{i=1}^{2k} d(v_i) = k(2k-1)$} for $v_i \in V(K_{2k}^r)$. That is, $k \geq 4$. By Lemma \ref{independencenumber}, we have $\alpha \leq n - \delta = 2k-(k-1) = k+1$ in $K_{2k}^b$. We then observe that the set of $k$ vertices of degree $k-1$ in $K_{2k}^b$ is indeed a maximum independent set of size $k$ in $K_{2k}^b$, i.e., $\alpha = k$ in $K_{2k}^b$. Otherwise, i.e., $\alpha =k+1$ then $\Delta = k+1$, which is a contradiction.

        \indent
        Let us consider the spanning bipartite subgraph $\widetilde{K_{2k}^b}$ of $K_{2k}^b$ with the partite sets $V_1$ and $V_2$, which consist of vertices of degree $k-1$ and $k$ in $K_{2k}^b$, respectively. Notice that $|V_1| = |V_2| = k$ and there is no edge among the vertices of $V_1$ in $K_{2k}^b$. Hence,$\widetilde{K_{2k}^b}$ is obtained by removing all edges among the vertices of $V_2$ in $K_{2k}^b$, i.e., all edges among the vertices having degree $k$.
        Furthermore, $\widetilde{K_{2k}^b}$ contains $k(k-1) = k^2 - k$ edges since the vertices of $V_1$ have degree $k-1$ in $\widetilde{K_{2k}^b}$, whereas $K_{2k}^b$ contains $k^2-k/2$ edges since the vertices of $V_2$ have degree $k$ in $K_{2k}^b$. It means that $\widetilde{K_{2k}^b}$ is obtained by removing exactly $k/2$ edges, which join the vertices of degree $k$, from ${K_{2k}^b}$ where $k$ is even. Hence, one can observe that there is no isolated vertex of $V_2$ in $\widetilde{K_{2k}^b}$ since the vertices of $V_1$ form a maximum independent set in $K_{2k}^b$. Notice that the vertices in partite set $V_2$ have minimum degree at least $k/2$ in $\widetilde{K_{2k}^b}$ since we remove at most $k/2$ edges from a vertex of degree $k$ in $K_{2k}^b$. Hence, there is no leaf vertex of $V_2$ in $\widetilde{K_{2k}^b}$ since $k \geq 4$. It follows that for any nonadjacent vertices $u \in V_1$ and $v \in V_2$, we have $\min \{d(u) + d(v)\} = (k - 1) + k/2 \geq k+1$ in $\widetilde{K_{2k}^b}$ where $k \geq 4$. Since  $|V_1| = |V_2| = k$, $\widetilde{K_{2k}^b}$ has a Hamiltonian cycle $C_{2k}$ by Theorem \ref{m-m} , which is also a Hamiltonian cycle in $K_{2k}^b$. In this case, we therefore have a monochromatic cycle cover of the form $C_{2k}$ in $K_{2k}^b$.
    \end{enumerate}
    \end{proof}

We now combine Corollary \ref{lem-odd3no}, Lemmata \ref{lem-odd1} and \ref{lem-even} in the following way:

\begin{theorem}\label{combo}
 For $n \geq 5$, a complete graph $K_{n}$ with an equitable $2$-edge-coloring has a monochromatic cycle cover in a single color with at most two cycles. In particular, such a graph contains two cycles of the same size and the same color, or a monochromatic Hamiltonian cycle.
\end{theorem}

By Theorem \ref{combo}, we obtain the first main result of this section as follows:

\begin{corollary}\label{main}
For $n \geq 5$, the solution of the \textsc{MinRC3} problem equals zero for $K_n$ having an equitable $2$-edge-coloring.
\end{corollary}

\begin{remark}
In $K_4$ with an equitable $2$-edge-coloring, the only case where the solution of the \textsc{MinRC3} problem is nonzero is when both colors induce a path on three edges.
\end{remark}

\subsection{Complete Graphs with a Nearly Equitable $2$-Edge-Coloring}\label{subsec:nearlyequitable}
    
We now analyze the \textsc{MinRC3} problem in complete graphs having a nearly equitable $2$-edge-coloring. Note that every equitable $2$-edge-coloring is indeed a nearly equitable $2$-edge-coloring. Then, we only need to study the \textsc{MinRC3} problem in complete graphs having a nearly equitable $2$-edge-coloring that is not an equitable $2$-edge-coloring, say a \emph{sharp nearly equitable $2$-edge-coloring}.

In the following lemma, we prove that a complete graph $K_{2k}$, where $k \geq 2$, cannot have a sharp nearly equitable $2$-edge-coloring.

\begin{lemma}\label{near-even}
In a complete graph $K_{2k}$, where $k \geq 2$, any nearly equitable $2$-edge-coloring is indeed an equitable $2$-edge-coloring.
\end{lemma}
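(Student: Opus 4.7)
The plan is to use a parity argument on the vertex degrees. In $K_{2k}$, every vertex $v$ has degree $d(v) = 2k - 1$, which is odd. Since the coloring uses only two colors, we have $|c_1(v)| + |c_2(v)| = 2k - 1$ at every vertex $v$. Because the sum and the difference of two integers share the same parity, $|c_1(v)| - |c_2(v)|$ must also be odd, hence $||c_1(v)| - |c_2(v)||$ is an odd positive integer, i.e., it lies in $\{1, 3, 5, \ldots\}$.

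Now I would simply observe that the nearly equitable hypothesis forces $||c_1(v)| - |c_2(v)|| \leq 2$. Combined with the fact that this quantity is odd, the only possibility is $||c_1(v)| - |c_2(v)|| = 1$. Since this holds for every vertex $v \in V(K_{2k})$, the coloring is in fact equitable by definition.

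I do not anticipate any real obstacle: the statement is purely a parity observation about the degree $2k-1$ being odd, and the argument works uniformly for all $k \geq 2$ (and in fact even for $k = 1$, though that case is excluded by hypothesis). No case analysis or structural reasoning about monochromatic subgraphs is required. The proof will be only a couple of lines long and will essentially reduce to noting that an odd integer of absolute value at most $2$ must have absolute value exactly $1$.
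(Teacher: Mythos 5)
Your proof is correct and is essentially the same parity argument as the paper's: the paper phrases it as a contradiction (a difference of $2$ would force the degree $2k-1$ to be even), while you argue directly that an odd difference bounded by $2$ must equal $1$. No substantive difference.
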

\begin{proof}
Assume that in the complete graph $K_{2k}$ we have a sharp nearly equitable $2$-edge-coloring with colors red and blue. That is, there exist a vertex $v$ such that  $||{r}(v)|-|{b}(v)|| = 2$, implying that the degree of $v$ is even. However, this contradicts the fact that the degree of $v$ is $2k-1$ in $K_{2k}$.
\end{proof}

 By Corollary \ref{lem-odd3no}, we see that the complete graph $K_{4k+3}$ cannot have an equitable $2$-edge-coloring. On the other hand, in 1982 Hilton and de Werra \cite{HiltonWerra82} proved the following:
\begin{lemma}\textbf{\cite{HiltonWerra82}}\label{nearlyequi}
Any graph $G$ has a nearly equitable edge-coloring with $r$ colors, where $r \geq 2$.
\end{lemma}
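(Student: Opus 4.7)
The plan is to prove the lemma by an iterative recoloring argument. I would start from any $r$-edge-coloring of $G$ (e.g.\ the one assigning every edge color $c_1$) and define the nonnegative integer potential
$$\Phi(\chi) \;=\; \sum_{v \in V(G)} \sum_{1 \le i < j \le r} \max\!\bigl(0,\, \bigl||c_i(v)|-|c_j(v)|\bigr| - 2\bigr),$$
which vanishes precisely when $\chi$ is nearly equitable. The task then reduces to showing that whenever $\Phi(\chi) > 0$, there is a local modification of $\chi$ that strictly decreases $\Phi$; since $\Phi$ takes integer values in $[0,\infty)$, iterating yields the lemma.

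For the modification itself I would use a Kempe-chain style swap in a two-color subgraph. If $\Phi(\chi) > 0$, pick a vertex $v_0$ and colors $i,j$ with $|c_i(v_0)|-|c_j(v_0)| \ge 3$ (choosing $(i,j)$ to realize the maximum imbalance at $v_0$), and work inside the subgraph $H_{ij}$ of $G$ spanned by the edges of colors $i$ and $j$. Grow a maximal trail in $H_{ij}$ that starts at $v_0$ with an $i$-edge and alternates colors. Swapping the two colors along this trail leaves the color counts at every \emph{internal} vertex unchanged, because such a vertex is entered and left in matched pairs whose colors both flip; so only $v_0$ and the terminal endpoint $w^{*}$ see their counts change. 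At $v_0$ the imbalance $|c_i(v_0)|-|c_j(v_0)|$ drops by $2$, and by the choice of $(i,j)$ none of the other $r-1$ pairs involving $i$ or $j$ at $v_0$ is worsened beyond the threshold $2$.

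The hard part will be controlling two boundary cases. First, the maximal alternating trail could close up at $v_0$, giving no gain in the $(i,j)$-imbalance there; I would resolve this by observing that a closed $(i,j)$-alternating trail at $v_0$ uses equal numbers of $i$- and $j$-edges incident to $v_0$, so the excess $|c_i(v_0)|-|c_j(v_0)| \ge 3$ must be witnessed by an $i$-edge outside the trail, from which one restarts the construction to obtain an open alternating trail. Second, the terminal endpoint $w^{*}$ could inherit a new imbalance that pushes $\Phi$ upward at $w^{*}$; since $\sum_v(|c_i(v)|-|c_j(v)|) = 2(|E(G(c_i))|-|E(G(c_j))|)$ is preserved by the swap, and the swap only moves $\pm 1$ of imbalance between $v_0$ and $w^{*}$, I would argue by a parity/extremality choice of $w^{*}$ (selecting among candidate maximal trails one whose endpoint currently has a deficit of the color being added) that the increment at $w^{*}$ does not overcome the $-2$ gain at $v_0$. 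Establishing the existence of such a favorable endpoint is the delicate combinatorial step and is where the Hilton--de Werra argument really lives; once it is in place, the monovariant $\Phi$ forces termination at a nearly equitable $r$-edge-coloring.
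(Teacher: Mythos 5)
First, note that the paper does not prove this lemma at all: it is quoted verbatim from Hilton and de Werra \cite{HiltonWerra82}, so there is no internal proof to compare against and your argument must stand on its own. Much of it does, and in fact better than you realize. For the distinguished pair $(i,j)$, the swap along a maximal alternating trail behaves well at \emph{both} ends automatically: if the trail cannot be extended at its terminus $w^{*}$, then every edge of the required continuation color $c'$ at $w^{*}$ already lies on the trail, and counting the matched $(i,j)$-pairs contributed by internal passages gives $|c_{c}(w^{*})|\ge |c_{c'}(w^{*})|+1$, where $c$ is the color of the final edge; hence flipping that edge moves $w^{*}$ \emph{toward} $(i,j)$-balance, never away from it. Similarly, a closed trail of even length at $v_0$ is impossible under $|c_i(v_0)|\ge |c_j(v_0)|+3$ (maximality would force $|c_j(v_0)|\ge |c_i(v_0)|$), and a closed trail of odd length drops the imbalance at $v_0$ by $4$. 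So the two ``boundary cases'' you single out as the hard part are not where the difficulty lies, and for $r=2$ --- the only case this paper actually uses, in Lemma \ref{rem1} --- your argument can be completed along exactly these lines (it is then essentially de Werra's Euler-trail argument \cite{Werra} in disguise).

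The genuine gap is for $r\ge 3$ and concerns the cross-color terms at $w^{*}$. The swap raises one of $|c_i(w^{*})|,|c_j(w^{*})|$ by one and lowers the other by one, and this can increase the $\Phi$-terms at $w^{*}$ for pairs $(i,\ell)$ and $(j,\ell)$ with third colors $\ell$: if, say, $|c_j(w^{*})|-|c_\ell(w^{*})|$ was already $2$, it becomes $3$ and contributes $+1$ to $\Phi$. Summed over the $r-2$ third colors this increase can reach $2(r-2)$, while the guaranteed decrease at $v_0$ can be as small as $1$ (when the imbalance there is exactly $3$), so $\Phi$ as you defined it is not a monovariant. Your proposed remedy --- selecting a maximal trail ``whose endpoint currently has a deficit of the color being added'' --- is not available: the endpoint is dictated by the trail, and nothing guarantees that some maximal trail from $v_0$ ends at a vertex favorable for all $r-2$ cross pairs simultaneously. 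This is precisely the step you defer, and it is where the proof actually lives. The standard resolution (in \cite{HiltonWerra82} and in the algorithmic treatment of \cite{Xie}) is to recolor an entire connected component of the two-colored subgraph $H_{ij}$ at once via an Euler-tour alternation and to certify progress with a convex potential such as $\sum_{v}\sum_{i}|c_i(v)|^{2}$, whose decrease under a component recoloring can be verified globally rather than endpoint by endpoint; without this (or an equivalent device) the proposal does not establish the lemma for $r\ge 3$.
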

\noindent
In the following lemma, we show that a complete graph $K_{2k+1}$, {where $k\ge 1$}, may have a sharp nearly equitable $2$-edge-coloring.

\begin{lemma}\label{rem1}
For each $k \geq 1$, there exists a complete graph $K_{2k+1}$ with a sharp nearly equitable $2$-edge-coloring.
\end{lemma}
\begin{proof}
By  Lemma \ref{nearlyequi} and Corollary \ref{lem-odd3no}, we deduce that complete graphs $K_{4k+3}$ have a nearly equitable $2$-edge-coloring, but do not have an equitable $2$-edge-coloring. Thus, any nearly equitable $2$-edge-coloring is sharp in $K_{4k+3}$.
In a complete graph $K_{4k+1}$, we can obtain a sharp nearly equitable $2$-edge-coloring as follows: let $\chi$ be an equitable $2$-edge-coloring of $K_{4k+1}$ with colors red and blue; that is, each vertex of $K_{4k+1}$ is incident to $2k$ red edges and $2k$ blue edges. We define a new $2$-edge-coloring $\psi$ by interchanging the color of an edge, say from red to blue. It is easy to see that $\psi$ is a sharp nearly equitable $2$-edge-coloring since two vertices of $K_{4k+1}$ are incident to $2k-1$ red edges and $2k+1$ blue edges.
\end{proof}

In the following, we prove that for odd $n \ge 13$, any complete graph $K_n$ having a sharp nearly equitable $2$-edge-coloring has a monochromatic cycle cover with at most two cycles of sizes $\lfloor n/2 \rfloor$ and $\lceil n/2\rceil$ with a single color. Note here that if the complete graph $K_{2k+1}$, for $k \geq 2$, has a nearly equitable $2$-edge-coloring that is also an equitable $2$-edge-coloring, then  by Lemma \ref{lem-odd1} there exists a monochromatic Hamiltonian cycle $C_{2k+1}$.

\begin{lemma}\label{lem-odd3}
For a complete graph $K_{2k+1}$, $k \geq 6$, having a sharp nearly equitable $2$-edge-coloring, there exists a monochromatic cycle cover in a single color of the form $C_{k+1} + C_{k}$ or $C_{2k+1}$. In particular, there exists a cycle cover $C_{k+1} + C_{k}$ in a single color if $K_{2k+1}$ has a subgraph of the form $K_{k+1} + K_k$ induced by a color, and a Hamiltonian cycle $C_{2k+1}$ otherwise.
\end{lemma}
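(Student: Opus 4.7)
The plan is to mirror the structure of the proof of Lemma \ref{lem-even}. In a nearly equitable $2$-edge-coloring of $K_{2k+1}$, each vertex splits its $2k$ incident edges as $(k-1,k+1)$, $(k,k)$ or $(k+1,k-1)$ between the two colors, so both color-induced subgraphs satisfy $k-1\le\delta$ and $\Delta\le k+1$. I first dispose of the case in which one color, say red, has a disconnected induced subgraph. The bound $\delta\ge k-1$ on $2k+1$ vertices forces exactly two red components, of orders $k$ and $k+1$; the order-$k$ component is $K_k$ (each of its vertices has red degree exactly $k-1$ inside it), and the order-$(k+1)$ component must be $K_{k+1}$ minus some (possibly empty) matching $M$, since each of its vertices is missing at most one edge inside its own component. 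When $M=\emptyset$ the red subgraph is exactly $K_{k+1}+K_k$ and a Hamiltonian cycle in each component gives $C_{k+1}+C_k$. When $M\neq\emptyset$, the blue subgraph contains both the complete bipartite graph $K_{k+1,k}$ between the two parts and a matching edge $a_1a_2$ inside the part of size $k+1$, which is enough to exhibit the blue Hamiltonian cycle $a_1a_2b_1a_3b_2a_4\cdots a_{k+1}b_ka_1$ explicitly.

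Next I would assume both color-induced subgraphs are connected. If one of them, say red, has $\delta\ge k=\lfloor n/2\rfloor$, then Theorem \ref{buyuk} applies: the exception $K_{k+1}\cup K_{k+1}$ with a common vertex is ruled out by $\Delta\le k+1$, and the exception $\bar K_{k+1}\vee G_k$ combined with $\Delta\le k+1$ forces $G_k=\bar K_k$, making red isomorphic to $K_{k+1,k}$ and hence blue isomorphic to $K_{k+1}+K_k$, which again gives the $C_{k+1}+C_k$ cover.

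The only remaining regime is both colors connected with $\delta=k-1$ and $\Delta=k+1$, equivalently both the set $V_-$ of vertices of red degree $k-1$ and the set $V_+$ of vertices of red degree $k+1$ are nonempty. The hypothesis $k\ge 6$ is tailored precisely to give $\delta=k-1\ge(2k+3)/3=(n+2)/3$, opening the way for Theorem \ref{n-w2} (Nash-Williams) once $2$-connectivity and $\alpha\le k-1$ are verified for one of the two colors. As an alternative route, one examines Chv\'atal's condition (Theorem \ref{ch}): the condition succeeds for red if and only if $|V_-|\le k-2$ and $|V_+|\ge k+1$, and symmetrically for blue, so the only way both fail simultaneously is for $(|V_+|,|V_0|,|V_-|)$ to lie in a narrow range that must embed a $K_{k+1}+K_k$ substructure, returning us to the first branch of the dichotomy.

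The main obstacle is precisely this last regime. The difficulty is to argue robustly that at least one color is Hamiltonian when both are at the extreme $\delta=k-1$: either by establishing the independence-number bound $\alpha\le k-1$ required by Nash-Williams, in the style of Claim \ref{independencenumber} (bounding the largest independent set in one color through the maximum clique in its complement, whose vertices saturate $\Delta=k+1$), or by showing that every simultaneous failure of Chv\'atal's condition for both colors embeds an induced $K_{k+1}+K_k$. This case analysis, analogous to the last bullet in the proof of Lemma \ref{lem-even} but complicated by the presence of the third vertex class $V_0$, is the technical crux of the lemma.
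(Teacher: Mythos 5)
Your first two branches are sound and essentially reproduce the paper's treatment: the disconnected case forces one color to be $K_k+(K_{k+1}-M)$ for a matching $M$ (your explicit blue Hamiltonian cycle through $K_{k+1,k}$ plus one matching edge is a nice touch), and when some color has $\delta\ge k$ you correctly reduce via Theorem \ref{buyuk} to the exception $K_{k+1,k}$, whose complement is $K_{k+1}+K_k$. The problem is that the third regime --- both colors connected with $\delta=k-1$ and $\Delta=k+1$ --- is exactly where the lemma lives, and you explicitly leave it as ``the technical crux'' without an argument. As written, this is not a proof but a plan whose hardest step is unexecuted. Two concrete pieces are missing. First, you never say what to do when a color fails to be $2$-connected: Nash-Williams is then unavailable, and the paper resolves this (its Case 2a) by a direct construction, showing that removing a cut vertex $x$ leaves exactly two components of sizes in $\{k-1,k,k+1\}$ summing to $2k$, each dense enough for Dirac, and then reattaching $x$ to one of them to produce $C_k+C_{k+1}$. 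Second, when both colors are $2$-connected you still owe the bound $\alpha\le k-1$ for at least one color. The paper proves this by noting that an independent set of size $s\in\{k,k+1\}$ in one color is a clique of size $s$ in the other, and then counting edges across the bipartition (independent set versus the rest): each clique vertex placed outside the independent set can send at most two or three edges into it because its degree is at most $k+1$, which contradicts the at least $(k-1)$ edges each independent-set vertex must send out, once $k\ge 6$. None of this counting appears in your proposal.

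Your fallback route via Chv\'atal's condition does not close the gap either. Your computation of when the condition succeeds ($|V_-|\le k-2$ and $|V_+|\ge k+1$, and symmetrically for the other color) is correct, but the claim that simultaneous failure for both colors ``must embed a $K_{k+1}+K_k$ substructure'' is unjustified and in fact false: a coloring in which one vertex has red degree $k-1$, one has red degree $k+1$, and all others have red degree $k$ fails Chv\'atal for both colors while being nowhere near $K_{k+1}+K_k$. (Failure of Chv\'atal's sufficient condition of course says nothing about non-Hamiltonicity, so this is not a counterexample to the lemma, only to that branch of your strategy.) You should abandon that route and instead carry out the cut-vertex construction and the independence-number counting sketched above.
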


\begin{proof}
Let $\chi$ be a sharp nearly equitable $2$-edge-coloring of the complete graph $K_{2k+1}$ with colors red and blue. Since all vertices of $K_{2k+1}$ have even degree $2k$, $||{r}(v)| -|{b}(v)||$ is either $0$ or $2$. Notice that there must be at least one vertex $v$ with $||{r}(v)| - |{b}(v)|| = 2$ since  $\chi$ is a sharp nearly equitable $2$-edge-coloring. Indeed, by the proof of Lemma \ref{rem1} there exist at least two vertices $u$ and $v$ such that $||{r}(u)| - |{b}(u)|| = ||{r}(v)| - |{b}(v)|| = 2$.

We consider the subgraphs $K_{2k+1}^r$ and $K_{2k+1}^b$ induced by red and blue edges, respectively, in $K_{2k+1}$. For both of them, the minimum degree is at least $k-1$ and  the maximum degree is at most $k+1$; i.e., $\delta \geq k-1$ and $\Delta \leq k+1$. In the case where both $K_{2k+1}^r$ and $K_{2k+1}^b$ are regular graphs, i.e., without loss of generality $(k-1)$-regular and $(k+1)$-regular graphs, respectively, by Theorem \ref{di} we have a Hamiltonian cycle in the $(k+1)$-regular graph $K_{2k+1}^b$. Therefore, we have a monochromatic cycle cover of the form $C_{2k+1}$ in this case.

We then suppose that neither $K_{2k+1}^r$ nor $K_{2k+1}^b$ are regular graphs, i.e., we have $\delta \neq \Delta$ for both subgraphs. Then we have two cases for the minimum degree, namely $\delta = k-1$ or $\delta = k$.

\underline{\textbf{Case $1$:}} Assume that $\delta = k$ for one of $K_{2k+1}^r$ and $K_{2k+1}^b$, say $K_{2k+1}^r$. If $K_{2k+1}^r$ is disconnected, then each component has to have at least $k+1$ vertices, which contradicts the order being $2k+1$. Hence, such a graph has to be connected. Moreover, $K_{2k+1}^r$ has $\Delta = k+1$ because it is not regular. Thus, $K_{2k+1}^r$ is a connected graph with $\delta = k$ and $\Delta = k+1$ on $2k+1$ vertices. By Theorem \ref{buyuk}, $K_{2k+1}^r$ has a Hamiltonian cycle if it is neither the union of two complete graphs $K_{k+1}$ with one common vertex nor the join of an independent set of size $k+1$ with any graph $G_{k}$ with order $k$. First, $K_{2k+1}^r$ cannot be the union of two complete graphs $K_{k+1}$ with one common vertex since $\Delta = k+1$. Therefore, assume that $K_{2k+1}^r$ is the the join of an independent set of size $k+1$ with some graph $G_{k}$. Since $\Delta = k+1$, $G_{k}$ has to be an independent set; therefore, $K_{2k+1}^r$ is isomorphic to the complete bipartite graph $K_{k+1, k}$. Since $K^r_{2k+1}$ has odd order, a Hamiltonian cycle of it is an odd cycle, which contradicts the fact that it is a bipartite graph. Indeed, $K^r_{2k+1}$ cannot have any cycle cover in this case since any cycle cover in $K^r_{2k+1}$ has to have at least one odd cycle. Therefore, $K_{2k+1}^r$ has a Hamiltonian cycle unless it is isomorphic to $K_{k+1, k}$. In the case when $K_{2k+1}^r$ is isomorphic to $K_{k+1, k}$, consider $K_{2k+1}^b$, i.e., the complement of $K_{k+1,k}$.  Observe that $K_{2k+1}^b$ is the disjoint union of two complete graphs $K_{k+1}$ and $K_k$ with $\delta = k-1$ and $\Delta = k$; hence, there exists a monochromatic cycle cover of the form $C_{k+1} + C_{k}$ in $K_{2k+1}^b$. Therefore, we have a monochromatic cycle cover of the form either $C_{2k+1}$ or $C_{k+1} + C_{k}$ in this case. In particular, there exists a  monochromatic cycle cover $C_{k+1} + C_{k}$  if $K_{2k+1}$ has a subgraph $K_{k+1} + K_k$ induced by one of the colors.

\underline{\textbf{Case $2$:}}
Assume that  $\delta = k-1$ for both $K_{2k+1}^r$ and $K_{2k+1}^b$. Then, both of them have $\Delta = k+1$ since they are complements of each other in $K_{2k+1}$.  Assume that such a graph is disconnected. Since $\delta = k-1$, each component has at least $k$ vertices. Since $\Delta=k+1$, at least one component has $k+2$ vertices. Then the order of the graph has to be at least $2k+2$, contradiction. Therefore, a graph with $\delta = k-1$ and $\Delta = k+1$ on $2k+1$ vertices has to be connected. Thus, both $K_{2k+1}^r$ and $K_{2k+1}^b$ are connected graphs with $\delta = k-1$ and $\Delta = k+1$.

\noindent \underline{\textbf{Case $2a$:}}
Assume that at least one of $K_{2k+1}^r$ and $K_{2k+1}^b$ has connectivity $1$; that is, there exists a cut vertex $x$, say in $K_{2k+1}^r$. Then, the subgraph $K_{2k+1}^r - x$ has exactly two components since each component has to have at least $k-1$ vertices and $k\ge 6$. Then $K_{2k+1}^r - x$ has two components $A$ and $B$ such that $k-1 \leq |A|, |B| \leq k+1$ and $|A| + |B| = 2k$. Hence, we have $\delta \geq k-2$ and $\Delta \leq k$ for both $A$ and $B$. We now consider two disjoint and complementary cases:
\begin{itemize}
  \item Assume that $|A| = k-1$ and $|B| = k+1$. Then $A$ is the complete graph $K_{k-1}$ all of whose vertices are adjacent to $x$  in $K_{2k+1}^r$, since $\delta(K_{2k+1}^r) = k-1$. That is, we have a complete subgraph $K_k$ in $K_{2k+1}^r$, and hence a cycle $C_k$ that consists of all vertices of $A$ and $x$. On the other hand, $B$ is then a subgraph with $\delta \geq k-2$ on $k+1$ vertices. Since $k \ge 6$, by Theorem \ref{di}, $B$ has a Hamiltonian cycle $C_{k+1}$. This altogether constitutes a monochromatic cycle cover of the form $C_{k} + C_{k+1}$.

  \item  Assume that $|A| = |B| = k$. Since $\delta \ge k-2$ for both $A$ and $B$ and $k\ge 6$, both $A$ and $B$ have a Hamiltonian cycle $C_k$. Besides, since $|A|=|B|=k$, all vertices in $A$ and $B$ have degree at most $k$ in $K_{2k+1}^r$. Then, the only vertex having maximum degree $k+1$ in $K_{2k+1}^r$ is the cut vertex $x$. It follows that $x$ is adjacent to at least $\lceil (k+1)/2 \rceil$ vertices of either $A$ or $B$, say $A$. Since $x$ is adjacent to more than half of the vertices of $A$, $x$ must be adjacent to two consecutive vertices $y_1$ and $y_2$ of the Hamiltonian cycle $C_{k}$ in $A$. By using the path $y_1 - x - y_2$ instead of the edge $y_1 - y_2$ in $C_k$, we can construct a cycle $C_{k+1}$ covering all vertices of $A$ and $x$ in $K_{2k+1}^r$.
\end{itemize}
Therefore, in this case we have a monochromatic cycle cover of the form $C_{k} + C_{k+1}$.

\noindent\underline{\textbf{Case $2b$:}}
Assume that both subgraphs $K_{2k+1}^r$ and $K_{2k+1}^b$ have connectivity at least $2$. Recall that $\delta = k-1$ and $\Delta = k+1$. Let $\alpha_r$ and $\alpha_b$ be the independence numbers of $K_{2k+1}^r$ and $K_{2k+1}^b$, respectively. By Lemma \ref{independencenumber}, $\alpha_r$ and $\alpha_b$ must be strictly less than $n - \delta = (2k+1) - (k-1) = k+2$  because neither of $K_{2k+1}^r$ and $K_{2k+1}^b$ can be the complete bipartite graph $K_{k+2,k-1}$ since $\Delta=k+1$. We now consider the following disjoint and complementary cases:
\begin{itemize}
  \item Assume that at least one of $\alpha_r$ and $\alpha_b$ is less than or equal to the minimum degree $\delta = k-1$, say $\alpha_r \leq k-1$. Then, since $k\ge 6$, we have $ k-1 = \delta \geq \max\{(n+2)/3,k-1\} = \max\{(2k+1+2)/3, \alpha_r\}$. Therefore, by Theorem \ref{n-w2}, $K_{2k+1}^r$ has a Hamiltonian cycle.
  \item Assume that $\alpha_r = \alpha_b = k$. Then, there exists an independent set of size $k$ and a clique of size $k$ in $K_{2k+1}^r$. Let us partition the vertices of ${K_{2k+1}^r}$ into the sets $V_1$, which is an independent set of size $k$, and $V_2$, which consists of the remaining $k+1$ vertices. All vertices except possibly one vertex of the clique of size $k$ must lie in $V_2$ since $V_1$ is an independent set in $K_{2k+1}^r$. Then, each vertex of this clique lying in $V_2$, i.e., at least $k-1$ vertices, is adjacent to at most three vertices in $V_1$ since its degree is at most $k+1$. Hence, the number of edges joining the vertices of $V_1$ and $V_2$ is at most $3(k-1) + 2k = 5k-3$  by counting edges leaving $V_2$, and at least $k(k-1)=k^2 - k$ by counting edges leaving $V_1$. Since the inequality $k^2 - k > 5k-3$ always holds when $k \geq 6$, i.e., the minimum number of edges leaving $V_1$ is greater than the maximum number of edges leaving $V_2$ when $k \geq 6$, we have a contradiction. Therefore, we cannot have $\alpha_r = \alpha_b = k$.
  \item Assume that one of $\alpha_r$ and $\alpha_b$ is $k+1$ and the other is $k$, say $\alpha_r = k+1$ and $\alpha_b = k$. Then, there exists an independent set of size $k+1$ and a clique of size $k$ in $K_{2k+1}^r$. Let us partition the vertices of ${K_{2k+1}^r}$ into the sets $V_1$, which is an independent set of size $k+1$, and $V_2$, which consists of the remaining $k$ vertices. Since all vertices except possibly one vertex of the clique of size $k$ must lie in $V_2$, each vertex of this clique lying in $V_2$ is adjacent to at most three vertices in $V_1$. Hence, the number of edges joining the vertices of $V_1$ and $V_2$ is at most $3(k-1) + (k+1) = 4k-2$ by counting edges leaving $V_2$, and at least $(k+1)(k-1)=k^2 - 1$ by counting edges leaving $V_1$. Since the inequality $k^2 - 1 > 4k-2$ always holds when $k \geq 6$, we have a contradiction.
  \item Assume that $\alpha_r = \alpha_b = k+1$. In a similar way to the previous cases, there exists an independent set of size $k+1$ and a clique of size $k+1$ in both $K_{2k+1}^r$ and $K_{2k+1}^b$. Let us partition the vertices of ${K_{2k+1}^r}$ into the sets $V_1$, which is an independent set of size $k+1$, and  $V_2$, which consists of the remaining $k$ vertices. Since all vertices except exactly one vertex of the clique of size $k+1$ must lie in $V_2$, each vertex of this clique lying in $V_2$ is adjacent to at most two vertices in $V_1$. Hence, the number of edges joining the vertices of $V_1$ and $V_2$ is  at most $2k$ by counting edges leaving $V_2$, and at least $k(k-1)+k=k^2$ by counting edges leaving $V_1$. Note here that at least one vertex of $V_1$ is adjacent to all vertices in $V_2$. Since the inequality $k^2 > 2k$ always holds when $k \geq 6$, we have a contradiction.
\end{itemize}
Therefore, at least one of $\alpha^r$ and $\alpha^b$ must be less than or equal to the minimum degree $\delta = k-1$. Hence, in this case we have a monochromatic cycle cover of the form $C_{2k+1}$ where $k \geq 6$.

\end{proof}

By combining Lemmata \ref{near-even} and \ref{lem-odd3}, we obtain the following result:

\begin{theorem}\label{combo-near}
 A complete graph $K_{n}$ with $n\ge 13$ and a nearly equitable $2$-edge-coloring has a monochromatic  cycle cover in a single color with at most two cycles, which have sizes ${\lfloor n/2\rfloor}$ and ${\lceil n/2\rceil}$.
\end{theorem}

Hence, we obtain the second main result of this section as follows:

\begin{corollary}\label{main-near}
The solution of the \textsc{MinRC3} problem equals zero for  complete graphs with at least 13 vertices and a nearly equitable $2$-edge-coloring.
\end{corollary}

\section{Algorithm for \textsc{MinRC3}}\label{sec:algo}
\newcommand{\alg}{\textsc{MonochromaticCycleCoverAlgorithm \textit{(MCCA)}}}
\newcommand{\equOdd}{\textsc{EquitableWithOddOrder}}
\newcommand{\equEven}{\textsc{EquitableWithEvenOrder}}
\newcommand{\nearEquOdd}{\textsc{NearEquitableWithOddOrder}}

\newcommand{\Dirac}{\textsc{DiracHamiltonian}}
\newcommand{\Closure}{\textsc{ClosureHamiltonian}}
\newcommand{\ExtensionDirac}{\textsc{ExtensionDiracHamiltonian}}

\alglanguage{pseudocode}

In this section we present an algorithm referred to as the \textsc{MonochromaticCycleCoverAlgorithm \textit{(MCCA)}}, which, given a complete graph $K_n$ with a $2$-edge-coloring, returns either a monochromatic cycle cover $\textbf{C}$ or ``NONE". Although \textsc{\textit{MCCA}} may in general return ``NONE" for a complete graph with a $2$-edge-coloring $\chi$, we will show that except possibly on a complete graph with four vertices, it returns a monochromatic cycle cover if  $\chi$ is an equitable $2$-edge-coloring. Furthermore, except for some special cases, \textsc{\textit{MCCA}} mostly (but not always) returns a monochromatic cycle cover if $\chi$ is a nearly equitable $2$-edge-coloring.

We first consider a complete graph $K_{n}$ of even order $n$, say $n=2k$. By Lemma \ref{near-even}, any nearly equitable $2$-edge-coloring is indeed an equitable $2$-edge-coloring in $K_{2k}$. Hence, the algorithm \textsc{\textit{MCCA}} works identically for both equitable and nearly equitable $2$-edge-colorings in $K_{2k}$. We then consider the case where $K_{2k}$ has an equitable $2$-edge-coloring. Given a subgraph $G$ of $K_{2k}$ induced by a color, the algorithm tests $G$ for $\delta \geq n/2$, which is Dirac's sufficiency condition for hamiltonicity given in Theorem \ref{di}. Once ${G}$ passes the test, the algorithm constructs a Hamiltonian cycle via the function $\Dirac$, which builds a Hamiltonian cycle by following the proof of Theorem \ref{di}. According to the proof of Lemma \ref{lem-even}, a monochromatic cycle cover, in particular a Hamiltonian cycle, is obtained when the subgraph $G$ induced by a color is a disconnected or a regular graph in this case. If $G$ fails to satisfy the condition $\delta \geq n/2$, then  the algorithm builds the closure $G^*$ of $G$ and tests $G^*$ for being a complete graph according to Bondy-Chv\'{a}tal's hamiltonicity condition given in Theorem \ref{bo}. Once $G^*$ passes the test, the algorithm constructs a Hamiltonian cycle via the function $\Closure$, which builds a Hamiltonian cycle by following the proof of Theorem \ref{bo}. Indeed, in the rest of the proof of Lemma \ref{lem-even} we use Theorems \ref{bo}, \ref{ch} and \ref{m-m}, which give sufficiency conditions for closure and hamiltonicity of $G$. Hence, the function $\Closure$ will be sufficient to construct a monochromatic cycle cover, in particular a Hamiltonian cycle, in order to complete the rest of this case.

We now consider a complete graph $K_{n}$ of odd order $n$, say $n=2k+1$. In this case, we first assume that $K_{2k+1}$ has an equitable $2$-edge-coloring. Indeed, since the complete graph $K_{4k+3}$ does not have an  equitable $2$-edge-coloring by Corollary \ref{lem-odd3no}, we only need to consider a complete graph $K_{4k+3}$ with an equitable $2$-edge-coloring. Given a subgraph $G$ of $K_{4k+3}$ induced by a color, the algorithm tests $G$ for  $\delta \geq \lfloor n/2 \rfloor$, which is Büyükçolak's sufficiency condition for hamiltonicity given in Theorem \ref{buyuk}. Once ${G}$ passes the test, the algorithm constructs a Hamiltonian cycle via the function $\ExtensionDirac$, which builds a Hamiltonian cycle by following the proof of Theorem \ref{buyuk} given in \cite{Buyukcolak}. According to the proof of Lemma \ref{lem-odd1}, a monochromatic cycle cover, particularly a Hamiltonian cycle, is obtained.

Let us consider the case where a complete graph $K_{2k+1}$ has a nearly equitable $2$-edge-coloring. By Lemma \ref{rem1} and \ref{lem-odd3}, a complete graph $K_{4k+1}$ may have equitable and nearly equitable $2$-edge-colorings in different forms, whereas a complete graph $K_{4k+3}$ can only have a nearly equitable $2$-edge-coloring. Therefore, if $K_{4k+1}$ has a nearly equitable $2$-edge-coloring $\chi$, which is also an equitable $2$-edge-coloring, then the algorithm constructs a monochromatic cycle cover, in particular a Hamiltonian cycle, by considering $\chi$ as an equitable $2$-edge-coloring. We then consider the case where a complete graph $K_{2k+1}$ has a sharp nearly equitable $2$-edge-coloring. Given a subgraph $G$ of $K_{2k+1}$ induced by a color, the algorithm works in the following way:
\begin{itemize}
  \item The algorithm tests $G$ for $\delta \geq n/2$ and constructs a Hamiltonian cycle via the function $\Dirac$ if $G$ passes the test. According to the proof of Lemma \ref{lem-odd3}, in this case a monochromatic cycle cover, in particular a Hamiltonian cycle, is obtained when the subgraph $G$ induced by a color is a regular graph.
  \item Otherwise, the algorithm then tests $G$ for $\delta \geq \lfloor n/2 \rfloor$, which is Büyükçolak's sufficiency condition for hamiltonicity given in Theorem \ref{buyuk}. If ${G}$ passes the test, then the algorithm constructs either a cycle cover $C_{k+1}+ C_k$ by following the proof of Theorem \ref{buyuk} given in \cite{Buyukcolak} or a Hamiltonian cycle via the function $\ExtensionDirac$. According to the proof of Lemma \ref{lem-odd3}, a monochromatic cycle cover of the form $C_{k+1}+ C_k$ is obtained in the complement of $G$ when the subgraph $G$ is a complete bipartite graph $K_{k+1,k}$ in this case. Indeed, if $G = K_{k+1,k}$, the algorithm constructs two cycles $C_1$ and $C_2$ with the vertices of degree  $\lceil n/2\rceil$ and the vertices of degree $\lfloor n/2\rfloor$, respectively. Notice that the order of vertices in $C_1$ and $C_2$ makes no difference for the Hamiltonian cycle since these sets of vertices form two distinct complete graphs in the complement of $G$.
  \item Otherwise, the algorithm tests $G$ for a cut vertex $x$. If $G$ passes the test, then the algorithm constructs a cycle cover of the form $C_{k+1}+ C_k$ in two different ways by using the structure of $G-x$ given in the proof of Lemma \ref{lem-odd3}.
\end{itemize}
Notice that the algorithm returns ``NONE" for a complete graph with a nearly equitable $2$-edge-coloring $\chi$ if both subgraphs induced by the colors are $2$-connected with $\delta = k-1$ and $\Delta =k+1$ (Case 2b in the proof of Lemma \ref{lem-odd3}). {In other words, the algorithm remains inconclusive in this case since the corresponding part of the proof is not constructive.}

\begin{algorithm}[H]
\caption{\alg}\label{alg:MCCA}
\begin{algorithmic}[1]
\Require {A complete graph $K_n$ of order $n$ with a $2$-edge-coloring $\chi$}
\Ensure { $\textbf{C}$ is a monochromatic cycle cover of $K_n$ in a single color}

\State $G_1 \gets$ the subgraph of $K_n$ induced by red.
\State $G_2 \gets$ the subgraph of $K_n$ induced by blue.

\State $\delta_1 \gets$ the minimum degree of $G_1$.
\State $\delta_2 \gets$ the minimum degree of $G_2$.

\State $\Delta_1 \gets$ the maximum degree of $G_1$.
\State $\Delta_2 \gets$ the maximum degree of $G_2$.

\For{$i=1$ to $2$}

\If{$\delta_i \geq n/2$}
\State  $C \gets $ \Call{DiracHamiltonian}{$G_i,\delta_i$}
\If {$C \neq$ NONE} \Return $C$.  \EndIf
\EndIf

\State $G_i^* \gets $ closure of $G_i$

\If{$G_i^*$ is a complete graph}
\State $C \gets $ \Call{ClosureHamiltonian}{$G_i$,$G_i^*$}
\If {$C \neq$ NONE} \Return $C$.  \EndIf
\EndIf

\If{$\delta_i = \lfloor n/2\rfloor$}

\If{$G_i$ is a complete bipartite graph $K_{\lceil n/2\rceil,\lfloor n/2\rfloor}$}
\State  $C_1 \gets $ the vertices of degree $\lceil n/2\rceil$
\State  $C_2 \gets $ the vertices of degree $\lfloor n/2\rfloor$
\State  $C \gets   C_1 + C_2$
\If {$|V(C)| = |V(G_i)|$} \Return $C$.  \EndIf
\Else
\State  $C \gets $ \Call{ExtensionDiracHamiltonian}{$G_i,\delta_i$}
\EndIf
\If {$C \neq$ NONE} \Return $C$.  \EndIf
\EndIf

\If{$G_i$ has a cut vertex $x$ and if $k\ge 4$}
\State Let $A$ and $B$ be two components of $G_i - x$ such that $|B|\geq|A|$

\If{$|B|>|A|$}
\State  $C_1 \gets $ the vertices of $A$ and $x$
\State  $C_2 \gets $ \Call{DiracHamiltonian}{$B,\delta_i-1$}
\State  $C \gets   C_1 + C_2$
\If {$|V(C)| = |V(G_i)|$} \Return $C$.  \EndIf
\ElsIf{$|B|=|A|$}
\State  $C_1 \gets $ \Call{DiracHamiltonian}{$A,\delta_i-1$}
\State  $C_2 \gets $ \Call{DiracHamiltonian}{$B,\delta_i-1$}
\State Let $x$ be adjacent to more vertices of $A$ than $B$.
\State Let $P=x_0x_1 ... x_{k-1} \gets C_1$.
\For{$j = 0$ to $k-1$}
\If{$xx_{j} \in E(G_i)$ and  $xx_{j+1} \in E(G_i)$}
\State \Return $C_1=(x_0 ... x_{j}xx_{j+1} ... x_{k-1})$.
\EndIf
\EndFor
\State  $C \gets   C_1 + C_2$
\If {$|V(C)| = |V(G_i)|$} \Return $C$.  \EndIf
\EndIf
\EndIf
\EndFor\\
\Return NONE
\end{algorithmic}
\end{algorithm}

By using the constructive nature of Dirac's original proof for Theorem \ref{di}, the work in \cite{Web1} presents a polynomial-time algorithm for finding Hamiltonian cycles in graphs that satisfy the condition of Theorem \ref{di}, i.e.,  having at least three vertices and minimum degree at least half the total number of vertices. For the sake of completeness, in Algorithm \ref{alg:Dirac} we give a function which produces a Hamiltonian cycle under the condition of Theorem \ref{di}.

In Algorithm \ref{alg:Dirac}, the function $\Dirac$ first builds a maximal path by starting with an edge and then extending it in both directions as long as this is possible. Afterwards, the function closes the path to a cycle and then tries to find a larger path by adding to the cycle a new vertex and opening it back to a path. By the minimum degree condition $\delta \geq n/2$, any maximal path can be closed to a cycle and it is possible to extend a closed cycle to a larger path. Finally, the function builds a Hamiltonian path and then a Hamiltonian cycle.

\begin{algorithm}[H]
\caption{DiracHamiltonian}\label{alg:Dirac}
\begin{algorithmic}[1]
\Function{DiracHamiltonian}{$G$, $\delta$}
\Require{$\delta \geq |V(G)|/2$}
\Ensure{return a Hamiltonian cycle $C$ or ``NONE"}

\State $P \gets$ a trivial path in $G$.
\Repeat
\While{$P$ is not maximal}
\State Append an edge to $P$.
\Comment{$P$ is a maximal path in $G$.}
\EndWhile
\State Let $P=x_0x_1 ... x_k$.
\For{$i = 0$ to $k-1$}
\If{$x_0x_{i+1} \in E(G)$ and  $x_ix_{k} \in E(G)$}
\State \Return $C=(x_0,x_{i+1} ... x_{k-1}, x_k,x_i,x_{i-1} ... x_1,x_0)$. \\
\Comment{The existence of such an index $i$ is guaranteed by minimum degree condition.}
\EndIf
\EndFor
\If{$C \neq$ NONE and $|V(C)| \neq |V(G)|$}
\State Let $e$ be an edge with exactly one endpoint in $C$.
\State Let $e'$ be an edge of $C$ incident to $e$ \Comment{There are two such edges.}
\State $P \gets C + e - e'$
\EndIf
\Until{$|V(C)|=|V(G)|$ or $C=$ NONE}
\State \Return $C$.
\EndFunction
\end{algorithmic}
\end{algorithm}

As a result of the constructive nature of Bondy-Chv\'{a}tal's proof for Theorem \ref{bo}, there exists a polynomial-time algorithm which produces a Hamiltonian cycle in graphs whose closure is a complete graph. For the sake of completeness, we give such an algorithm  in Algorithm \ref{alg:Closure}.

In Algorithm \ref{alg:Closure}, the function first arbitrarily arranges all vertices in a cycle since the closure is complete. Note that this cycle is Hamiltonian since it contains all vertices of the graph. If all edges of the cycle are already in the graph, then we are done. Otherwise, there exists an edge $e$ which is in the closure but not in the graph. The function opens this Hamiltonian cycle to a Hamiltonian path $P=x_0x_1 ... x_k$ by removing $e$, and builds a new Hamiltonian cycle from this Hamiltonian path using edges $x_0x_{i+1}$ and $x_kx_{i}$ and removing edge $x_ix_{i+1}$ in the graph for some $2 \leq i \leq k-1$. The existence of such an edge is guaranteed by definition of closure, i.e., $d(x_0) + d(x_k) \geq n$. After repeating this process for each edge which is in the closure but not in the graph, the function constructs a Hamiltonian cycle in the graph.

\begin{algorithm}[H]
\caption{ClosureHamiltonian}\label{alg:Closure}
\begin{algorithmic}[1]
\Function{ClosureHamiltonian}{$G$,$cl(G)$}
\Require{$cl(G)$ is a complete graph}
\Ensure{return a Hamiltonian cycle $C$ or ``NONE"}

\State $C \gets$ a Hamiltonian cycle in $cl(G)$. \\
\Comment{$C$ can be obtained by arbitrarily arranging all vertices.}

\For{$j = 1$ to $k$}

\If{$e_i$ is not an edge in $G$}

\State $P \gets C - e_j$
\Comment{$P$ is a Hamiltonian path in $\bar{G}$.}
\State Let $P=x_0x_1 ... x_k$.
\For{$i = 2$ to $k-1$}
\If{$x_0x_{i+1} \in E({G})$ and  $x_ix_{k} \in E({G})$}
\State $C \gets (x_0,x_{i+1} ... x_{k-1}, x_k,x_i,x_{i-1} ... x_1,x_0)$.
\EndIf
\EndFor
\EndIf

\EndFor

\Repeat
\State Let $E=\{e_1, e_2, ...  ,e_k\}$ be the edge set of $C$.

\Until{$E \subseteq E(G)$ or $C=$ NONE}
\State \Return $C$.
\EndFunction
\end{algorithmic}
\end{algorithm}

Since the function $\ExtensionDirac$ and its constructive structure are explicitly stated in \cite{Buyukcolak}, we do not give the algorithm $\ExtensionDirac$ here. We refer to \cite{Buyukcolak} for details.

\section{Conclusion}

In this work, we show that there exists a monochromatic cycle cover in complete graphs with at least $13$ vertices and a nearly equitable $2$-edge-coloring. Hence, we conclude that the minimum reload cost is  zero in these graphs. In general, all proofs except one part in this paper are constructive. Then, we provide a polynomial-time algorithm that constructs a monochromatic cycle cover, in particular a Hamiltonian cycle or two cycles whose sizes differ by at most one, in complete graphs with a nearly equitable $2$-edge-coloring. This algorithm builds a monochromatic cycle cover in all complete graphs with an equitable $2$-edge-coloring, whereas it may remain inconclusive in some complete graphs with a sharp nearly equitable $2$-edge-coloring. In particular, the algorithm remains inconclusive for the case where both subgraphs induced by a color in a complete graph of odd order $2k+1$ with a sharp nearly equitable $2$-edge-coloring is $2$-connected with $\delta = k-1$ and $\Delta = k+1$.

We believe that the \textsc{MinRC3} problem may have solution zero for other types of $2$-edge colorings in complete graphs because of the insight provided by this work. As a future work, we  plan to study the \textsc{MinRC3} problem in complete graphs with a $2$-edge coloring in general and design an algorithm that not only determines whether a monochromatic cycle cover exists but also constructs a monochromatic cycle cover whenever it exists.

\bibliography{sample}

\end{document}